\numberwithin{equation}{section}
\theoremstyle{plain}
\newtheorem{thm}{Theorem}[section]
\newtheorem{dfn}[thm]{Definition}
\newtheorem{prop}[thm]{Proposition}
\newtheorem{lemma}[thm]{Lemma}
\newtheorem{remark}{Remark}[section]
\def\HS{{\operatorname{HS}}}
\def\tr{{\operatorname{tr}}}
\def\diag{{\operatorname{diag}}}
\begin{document}

\begin{frontmatter}
\title{Quantum Expander Mixing Lemma and its Structural Converse}
\runtitle{Quantum EML and its Converse}

\begin{aug}


\author[A]{\fnms{Ning}~\snm{Ning}\ead[label=e1]{patning@tamu.edu}}

\address[A]{Department of Statistics,
	Texas A\&M University. \printead{e1}}

\end{aug}

\begin{abstract}
Expander graphs are fundamental in both computer science and mathematics, with a wide array of applications. With quantum technology reshaping our world, quantum expanders have emerged, finding numerous uses in quantum information theory, quantum complexity, and noncommutative pseudorandomness. The classical expander mixing lemma plays a central role in graph theory, offering essential insights into edge distribution within graphs and aiding in the analysis of diverse network properties and algorithms. This paper establishes the quantum analogue of the classical expander mixing lemma and its structural converse for quantum expanders.
\end{abstract}

%

\end{frontmatter}


\section{Introduction}
\label{sec:classical_results}
We first review quantum expanders in Section~\ref{sec:QE}, then present the quantum expander mixing lemma in Section~\ref{sec:QEML}. Our main result, the converse quantum expander mixing lemma, is established in Section~\ref{sec:CQEML}, followed by the overall organization of the paper in Section~\ref{sec:Organization}.

\subsection{Quantum Expanders}
\label{sec:QE}
Expander graphs are foundational in both computer science and mathematics, offering diverse applications across these domains \citep{hoory2006expander, lubotzky2012expander}. Leveraging their expansion property, these graphs contribute significantly to algorithmic innovations, cryptographic protocols, analysis of circuit and proof complexity, development of derandomization techniques and pseudorandom generators, as well as the theory of error-correcting codes. Additionally, expander graphs play pivotal roles in shaping structural insights within group theory, algebra, number theory, geometry, and combinatorics. 
As quantum technology is revolutionizing the world, quantum expanders were introduced and have found numerous applications in the field of quantum information theory \citep{ambainis2004small, ben2008quantum, hastings2007entropy, hastings2008classical, aharonov2014local}. 

Quantum expanders are the quantum extension of expander graphs, described by means of operators that map quantum states to quantum states. A general quantum state is a density matrix, which is a trace-one and positive semidefinite operator, i.e.,  $$\rho=\sum\limits_v p_v|\psi_v\rangle\langle\psi_v|,\qquad 0 \leq p_v \leq 1, \qquad\sum\limits_v p_v=1,$$ with $\left\{\psi_v\right\}$ being some orthonormal basis of a Hilbert space $\mathcal{V}$. Notice that $\rho \in L(\mathcal{V}) := \operatorname{Hom}(\mathcal{V}, \mathcal{V})$. An admissible quantum transformation $F: L(\mathcal{V}) \rightarrow L(\mathcal{V})$ is any transformation that can be implemented by a quantum circuit, with unitary operators and measurements. Thus, admissible quantum operators map density matrices to density matrices; see, \cite{nielsen2001quantum,kitaev2002classical}. Quantum expanders play a central role in quantum information theory as efficient substitutes for truly random quantum operations. They enable randomness reduction, allowing one to approximate Haar-random unitaries or quantum channels using only a small, fixed set of unitary operations, which is crucial since sampling Haar-random unitaries is computationally expensive. As a result, quantum expanders are used to construct approximate unitary designs, which are key tools in derandomization, quantum simulation, and complexity theory. Owing to their diverse applications, these ideas have drawn considerable attention from researchers in quantum information and related fields (see, e.g. \cite{lancien2020characterizing,li2023connections,lancien2025note}).

In this paper, we rigorously define quantum expanders following \cite{hastings2007random} and \cite{pisier2014quantum}. In essence, we consider a quantum operator $T_n: M(N_n)\rightarrow M(N_n)$ in the following normalized form (dividing by $d$):
\begin{align}
\label{eqn:Tn_normalization}
T_n(\eta)= \frac{1}{d}\sum_{j=1}^d u_j^{(n)}\eta u_j^{(n)*},
\end{align}
where $M(N_n)$ stands for the space of $N_n\times N_n$-dimensional complex matrices, $\big (u_1^{(n)},\ldots , u_d^{(n)}\big)$ is a $d$-tuple of $N_n\times N_n$-dimensional unitary matrices with $d$ being a positive integer that is independent of $n$, and the superscript $``\ast"$ represents the conjugate transpose. 
We have the operator norm $\|T_n\| = 1$ and $T_n(I_{N_n})=I_{N_n}$, where $I_{N_n}\in M(N_n)$ is the identity matrix. In the quantum context, the spectral gap is delineated by the reduced spectral radius $\big\|T_n|_{I_{N_n}^{\perp}}\big\|$, where $T_n$ is restricted to the orthogonal complement $I_{N_n}^{\perp}$ of the identity matrix. A quantum expander sequence comprises those having reduced spectral radius smaller than $1$ uniformly for all $n$, as $N_n$ tends to infinity with $n$.

\subsection{Quantum Expander Mixing Lemma}
\label{sec:QEML}
The expander mixing lemma (EML) formalizes the intuition that, in a suitably chosen
$d$-regular graph, edges behave almost as if they were placed uniformly at random.
Let $\mathcal{G}=(V,E)$ be a $d$-regular graph on $n$ vertices, and let
\[
\lambda(\mathcal{G}) := \max\{|\lambda_2|,\ldots,|\lambda_n|\}
\]
denote the second-largest eigenvalue in absolute value of the normalized adjacency
matrix $\frac{1}{d}\mathrm{Adj}(\mathcal{G})$. The EML asserts that for all subsets
$S_1,S_2 \subseteq V$,
\[
\left|\frac{1}{d}e(S_1,S_2)-\frac{1}{n}|S_1||S_2|\right|
\le \lambda(\mathcal{G}) \sqrt{|S_1||S_2|}.
\]
Equivalently, the number of edges between any two vertex subsets deviates only
slightly from the expected value $\frac{d}{n}|S_1||S_2|$ of a random $d$-regular
graph. This well-known result is presented, for example, in Corollary~9.2.5 of
\cite{alon2000probabilistic}. The EML plays a central role in graph theory by
providing quantitative control over edge distributions and underpins numerous
applications in the study of expansion, pseudorandomness, and graph-based
algorithms. Several variants of the EML have been developed in the literature,
including an operator version introduced in \cite{chen2013small}, which was later
extended to an iterated operator framework in \cite{jeronimo2022almost}.

As shown in Theorem~\ref{thm:Quantum_expander_nixing_lemma}, a quantum analogue of the EML can be established in a straightforward manner by analyzing the action of quantum channels on projections onto subspaces, which may be viewed as noncommutative analogues of vertex subsets in graphs.
This theorem asserts that for any two orthogonal  projections $P_1,P_2 \in M(N_n)$, the Hilbert–Schmidt inner product of $P_1$ and $T_nP_2$ is always close to $\frac{1}{N_n}\tr(P_1)\tr(P_2)$ for all $n$, for $T_n$ being in the normalized form given in \eqref{eqn:Tn_normalization}. In quantum mechanics, the Hilbert-Schmidt inner product of two matrices (which symbolize quantum states) serves as a measure of the similarity between these states. A substantial inner product implies a significant degree of similarity or correlation between the states, whereas a small inner product indicates dissimilarity or orthogonality. The quantum EML says that if a quantum operator/channel has a small reduced spectral radius (i.e., is a good quantum expander), then its action on observables is close to uniform in a suitable sense.

\subsection{Converse Quantum Expander Mixing Lemma}
\label{sec:CQEML}

\subsubsection{Existing Works: the First Formulation}
\label{sec:CQEML1}
Studying converses of the EML deepens our understanding of how combinatorial regularity reflects spectral properties, with implications across graph theory and its applications. Two main formulations of a converse to the EML have been proposed. The first is based on a notion of uniformity, defined via the parameter $\varepsilon(\mathcal{G})$, which is the smallest $\varepsilon>0$ such that for all subsets $S_1,S_2 \subseteq V$,
\[
\left|\frac{1}{d} e(S_1,S_2) - \frac{1}{n}|S_1||S_2|\right|
\le \varepsilon(\mathcal{G}) \sqrt{|S_1||S_2|}.
\]
The EML implies that $\varepsilon(\mathcal{G}) \le \lambda(\mathcal{G})$. We call a sequence $\mathcal{G}_n$ of $d_n$-regular graphs dense if $d_n=\Omega(n)$, and call $\mathcal{G}_n$ sparse if $d_n/n\to 0$.
 \cite{chung1989quasi} showed that for dense graphs, a converse holds in the sense that $\varepsilon(\mathcal{G}_n) = o(1)$ implies $\lambda(\mathcal{G}_n) = o(1)$. In contrast,  \cite{ConlonZhao2017QuasirandomCayley} demonstrated that there exist regular sparse graphs for which $\varepsilon(\mathcal{G}_n) = o(1)$ while $\lambda(\mathcal{G}_n) = \Omega(1)$. However, they also proved that for vertex-transitive graphs, $\lambda(\mathcal{G}_n) \le 4K_G \varepsilon(\mathcal{G}_n)$, where $1.6769\ldots \le K_G < 1.7822\ldots$ is the Grothendieck constant. 

\cite{bannink2020quasirandom} was the first work to investigate the equivalence of quasirandomness properties in the setting of quantum channels. They established that a straightforward quantum analogue of the EML yields $\epsilon(T_n) \le \lambda(T_n)$. Using a noncommutative version of Grothendieck’s inequality  \citep{haagerup1985grothendieck}, they further proved that for this class of quantum channels the reverse bound $\lambda(T_n) \le 2\pi^2 \epsilon(T_n)$ also holds universally. On the other hand, by embedding the sparse regular graph construction of \cite{ConlonZhao2017QuasirandomCayley} into the quantum setting, they demonstrated the existence of a sequence of quantum channels $\{T_n\}$ that are not irreducibly covariant and that satisfy $\epsilon(T_n) = o(1)$ while $\lambda(T_n) = \Omega(1)$.

\subsubsection{Existing Works: the Second Formulation}
\label{sec:CQEML2}
The second formulation of the converse to the classical EML is given in \cite{lev2015discrete}. Its significance does not lie in producing a new estimate of the same form as the EML, but rather in addressing a structural inverse question that the  EML leaves open. The EML asserts that if the second
singular value $\lambda(\mathcal{G})$ of a $d$-regular graph $\mathcal{G}$ is small, then every pair of vertex subsets has nearly the expected number of edges between them. This is a forward implication: small $\lambda(\mathcal{G})$ implies
uniform edge distribution.
However, the EML does not address the converse problem: if
$\lambda(\mathcal{G})$ is large, what structural properties must the graph
exhibit? Corollary~4 of \cite{lev2015discrete} provides a precise answer to this question. It shows that if $\lambda(\mathcal{G})$ is large, then there exist specific subsets $S_1,S_2 \subseteq V$ for which the number of edges between
$S_1$ and $S_2$ deviates significantly from the value predicted by random behavior.
Equivalently, a large spectral parameter forces the existence of subsets with
large edge discrepancy.

This converse formulation has several important implications. First, it provides
a \emph{structural certification of non-expansion}: a large value of
$\lambda(\mathcal{G})$ is not merely an abstract spectral defect, but necessarily
manifests itself through concrete combinatorial structure. In particular, the
existence of subsets $S_1,S_2$ witnessing poor mixing explains why a given graph
fails to be a good expander and allows one to diagnose defects in expander
constructions.
Second, the result has implications for the \emph{optimality and sharpness of
	expanders}. In expander theory, one often asks whether a construction with given
parameters is optimal. The converse EML yields lower bounds showing that if a graph
has second eigenvalue $\lambda(\mathcal{G})$, then it must exhibit an edge
discrepancy of order
\[
\frac{\lambda(\mathcal{G})}{\log(1/\lambda(\mathcal{G}))}\sqrt{|S_1||S_2|}.
\]
This demonstrates that the EML bound is essentially tight up to logarithmic
factors: no graph can ``hide'' a large second eigenvalue without revealing
explicit structure.
Third, the converse EML reveals a \emph{rigidity phenomenon}: a graph cannot be
spectrally non-expanding while remaining combinatorially pseudorandom at all
scales. This mirrors inverse results in additive combinatorics, quasirandomness
theory, and property testing, where failure of randomness necessarily produces
detectable structure. 

\subsubsection{Our contributions}
The present work is motivated by the question of whether an analogous inverse phenomenon to that studied in \cite{lev2015discrete} holds in the quantum setting. Quantum expanders are typically defined spectrally, via the reduced spectral radius of a quantum channel, and play a central role in quantum information theory, quantum complexity, and noncommutative pseudorandomness. While forward mixing properties of quantum expanders are now well understood, much less is known about the structural implications of weak spectral expansion. A fundamental challenge in establishing a converse principle in the quantum setting is the lack of an underlying combinatorial framework: unlike classical graphs, quantum channels do not act on functions over a vertex set, and there is no canonical notion of vertex subsets whose interactions can be counted. Consequently, deviations from uniformity must be formulated in terms of noncommutative observables, where commutativity and pointwise reasoning fail and classical techniques cannot be applied directly. The main difficulty therefore lies in identifying appropriate noncommutative formulations and constructing suitable intermediate objects that allow using classical arguments/results in the initial preparatory stage.

Our converse quantum EML (Theorem~\ref{thm:Quantum_expander_nixing_lemma_c}) overcomes these obstacles by introducing carefully designed intermediate quantities such as $T_n(U_1,U_2,V_1,V_2)$ in \eqref{eqn:T_fun}, by redefining classical quantities such as the height for a classical matrix using Schatten-$p$ norms in Definition \ref{def:Schatten_height}, by defining quantum analogous quantities such as the height for a quantum channel in Definition \ref{def:classical_height},
and by developing a fully original analysis in Propositions \ref{prop:T_map_connection}-\ref{prop:height_equal}, Lemmas \ref{thm:deltanorm}-\ref{thm:Rnorm} and Theorem~\ref{thm:Quantum_expander_nixing_lemma_c}. Prior works of \cite{lev2015discrete} and \cite{gillespie1991noncommutative} in the classical setting, are used only at an early stage to conduct preliminary analysis. Building on these foundations, we show that spectral non-expansion of a quantum channel necessarily manifests as explicit noncommutative structure: if the reduced spectral radius is large, then there exist projections in the ambient matrix algebra whose correlations under the channel deviate significantly from the uniform value. In this sense, our result provides a direct quantum analogue of Lev’s converse EML, with vertex subsets replaced by projections and edge counts replaced by trace correlations. As in the classical setting, this establishes a rigidity phenomenon—namely, that a quantum channel cannot fail to mix spectrally without exhibiting concrete witnesses of this failure. Beyond its conceptual significance, the theorem has implications for the analysis and optimality of quantum expander constructions, offering a diagnostic tool for detecting non-uniform behavior and complementing existing spectral characterizations of quantum quasirandomness by demonstrating that inverse principles persist, in an appropriate form, in the noncommutative regime.

\subsection{Organization of the Paper}
\label{sec:Organization}
The rest of the paper proceeds as follows. In Section \ref{sec:EML}, we formally define expander graphs and then present the classical EML along with its structural converse.
In Section \ref{sec:Main_results}, we rigorously define  quantum expander graphs and then state our quantum EML  along with its structural converse. Technical proofs of this paper are provided in Section \ref{sec:proofs}. Three auxiliary results are postponed to Appendix \ref{appendix}.

\section{Expander Mixing Lemma and  its Structural Converse}
\label{sec:EML}
We start by defining formally the concept of an expander graph and then state the EML in Theorem \ref{thm:Expander_nixing_lemma}. There exist several equivalent definitions of expander graphs, which can be characterized through vertex, edge, or spectral expansion. We adopt the spectral perspective towards expander graphs, defining them in terms of a certain spectral gap by means of Markov operators. We refer interested readers to \cite{tao2015expansion} for further details.

We consider an undirected graph $\mathcal{G}=(V,E)$, where $V$ is the set of vertices and $E$ is the set of edges.  A graph is finite if $V$ is finite, and hence $E$ is finite. We consider $\mathcal{G}$ as a $d$-regular graph, where each vertex of $V$ is contained in exactly $d$ edges in $E$; we say that $d$ is the degree of the regular graph $\mathcal{G}$.
We let $\ell^2(V)$ be the finite-dimensional complex Hilbert space of functions $f: V \rightarrow \mathbb{C}$ with norm and inner product given respectively as
$$
\|f\|_{\ell^2(V)}:=\Big(\sum_{v \in V}|f(v)|^2\Big)^{1 / 2}
\quad\text{and}\quad
\langle f, g\rangle_{\ell^2(V)}:=\sum_{v \in V} f(v) \overline{g(v)},
$$
where the overline notation represents the complex conjugate.
We then define the adjacency operator $A: \ell^2(V) \rightarrow \ell^2(V)$ on functions $f \in \ell^2(V)$ as the sum of $f$ over all of the neighbours of $v$, i.e.,
$$
A f(v):=\sum_{w \in V:\{v, w\} \in E} f(w).
$$
Clearly, $A$ is a linear operator and one can associate it with the adjacency matrix.

Since $\mathcal{G}$ is $d$-regular, a linear algebraic definition of expansion is possible based on the eigenvalues of the adjacency matrix $A$. 
For $\mathcal{G}$ being undirected, $A$ is real symmetric. It is known that for $\mathcal{G}$ being a $d$-regular graph having $n$ vertices, the spectral theorem implies that $A$ has $n$ real-valued eigenvalues $\lambda_1 \geq \lambda_2 \geq \ldots \geq \lambda_n$ and these eigenvalues are in $[-d, d]$.  
Because $\mathcal{G}$ is regular, the uniform distribution 
$u\in \mathbb {R} ^{n}$, with entry $u_i = 1/n$ for all $i =1, …, n$, is its stationary distribution. Thus, $u$ is an eigenvector of $A$ with eigenvalue $\lambda_1 = d$, i.e., $Au = du$. 
The spectral gap of $\mathcal{G}$ is defined to be $d - \lambda_2$, which measures the spectral expansion. 

The normalized variants of these definitions are commonly employed and offer greater convenience when displaying certain results. Consider the matrix 
$\frac{1}{d}A$, which is the Markov transition matrix of $\mathcal{G}$, and its eigenvalues are in $[-1, 1]$. It is associated with the Markov operator $T: \ell^2(V) \rightarrow \ell^2(V)$ 
defined by
\begin{align}
	\label{eqn:classical_T}
	\langle T\delta_x,\delta_y\rangle_{\ell^2(V)}=\left\{
	\begin{array}{ll}
		1/d \quad& (x,y)\in E, \\
		0 & (x,y)\notin E,
	\end{array}
	\right.
\end{align}
where $\delta_x$ is the Kronecker delta function on $x$. 
 Then, $T$ is self-adjoint (equal to its conjugate transpose $T^*$) and its operator norm
\[
\|T\|_{\ell_2\to \ell_2}
\;:=\;
\sup_{f \in \ell_2(V),\, f \neq 0}
\frac{\|Tf\|_{\ell_2(V)}}{\|f\|_{\ell_2(V)}}
\;=\;
\sup_{\|f\|_{\ell_2(V)} = 1} \|Tf\|_{\ell_2(V)} =1.
\]
Denote  $1_V\in \ell^2(V)$ for the constant function $v \mapsto1$, and then  $T(1_V)=1_V$. We can observe the reduction in the operator norm of $T$ when it is restricted to the orthogonal complement $1_V^\perp$.
\begin{dfn}
	\label{def:reduced_spectral_radius}
	Define the reduced spectral radius $\rho(\mathcal{G})$ of $\mathcal{G}$ as the restricted operator norm
	\begin{align*}
		\rho(\mathcal{G}):=\big\|T|_{1_V^\perp}\big\|_{\ell_2\to \ell_2}.
	\end{align*}
\end{dfn}

In this paper, we assume graph $\mathcal{G}$ is connected, and then the largest eigenvalue $1$ of $T$ has multiplicity $1$. We assume $\mathcal{G}$ is non-bipartite, and then $-1$ is not an eigenvalue of $T$. Therefore, the reduced spectral radius $\rho(\mathcal{G})$ is always smaller than $1$. We identify $\mathcal{G}$ with the set of vertices. A sequence $(\mathcal{G}_n)_{n\geq 1}$ of graphs is defined by $\mathcal{G}_n = (V_n, E_n)$, where each $\mathcal{G}_n$ is a finite, $d$-regular, simple, and connected graph with the cardinality of the vertex set $|V_n| = n$ and $n \to \infty$.
The reduced spectral radius $\rho(\mathcal{G}_n)$ for each $n\in \mathbb{N}$ is always smaller than $1$. 
The operator $\Delta := 1- T$ is sometimes known as the (combinatorial or graph) Laplacian. This is a positive semidefinite operator with at least one zero eigenvalue.
To inquire whether a uniform gap exists between $\rho(\mathcal{G}_n)$ and $1$, here is the definition of the expander sequence.
\begin{dfn}
	\label{def:expander}
	A $d$-regular graph is called an expander if there is a spectral gap of size $\tilde{\epsilon}>0$ in $\Delta$, in the sense that the first eigenvalue of $\Delta$ exceeds the second by at least $\tilde{\epsilon}$, equivalently $1-\rho(\mathcal{G})\geq \tilde{\epsilon}$.  
	A sequence of $d$-regular graphs $(\mathcal{G}_n)_{n\geq 1}$ is called an expander sequence if there exists $\tilde{\epsilon}>0$ such that $\rho(\mathcal{G}_n)\leq 1-\tilde{\epsilon}$ for all $n$.  
\end{dfn}
We have defined the notion of an expander sequence, which is a family of graphs instead of an individual graph. It is also commonly defined through the $(n,d,\lambda)$-graph, which is a $d$-regular graph with $n$ vertices such that all of the eigenvalues of its adjacency matrix 
except one have absolute value at most $\lambda$, i.e., $ \max _{i\neq 1}|\lambda _{i}|\leq \lambda$. When $d$ and 
$\lambda$ are held constant, $(n,d,\lambda)$-graphs behave as expanders with a spectral gap bounded away from zero.
\begin{thm}[EML, \cite{alon2000probabilistic}]
	\label{thm:Expander_nixing_lemma}
Consider the expander sequence defined in accordance to Definition \ref{def:expander}. 
For any two subsets $S_1,S_2 \subseteq V_n$, let
$$e(S_1,S_2)=\Big|\big\{(x,y)\in S_1\times S_2:xy\in E_n\big\}\Big|$$ be the number of edges between $S_1$ and $S_2$ (counting edges contained in the intersection of $S_1$ and $S_2$ twice). Then
	\begin{align*}
		\left|\frac{1}{d}e(S_1,S_2)-\frac{1}{n}|S_1| |S_2|\right| \leq (1-\tilde{\epsilon})\sqrt{|S_1| |S_2|},
	\end{align*}
where $|S_1|$ denotes the cardinality of the vertex subset $S_1$.
\end{thm}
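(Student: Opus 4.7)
The plan is to translate the edge count into a Hilbert-space inner product evaluated under the normalized Markov operator $T$ from \eqref{eqn:classical_T}, and then isolate the main term $\frac{1}{n}|S_1||S_2|$ via an orthogonal decomposition along the constant function $1_{V_n}$. The starting point is that, because $\langle T\delta_x,\delta_y\rangle_{\ell^2(V_n)}=1/d$ exactly when $\{x,y\}\in E_n$, bilinearity of the inner product immediately yields
$$\langle T\mathbf{1}_{S_1},\mathbf{1}_{S_2}\rangle_{\ell^2(V_n)}=\frac{1}{d}\,e(S_1,S_2),$$
so the statement to be proved is equivalent to a bilinear estimate for $T$.

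Next, I would write each indicator function in its unique orthogonal decomposition $\mathbf{1}_{S_i}=\frac{|S_i|}{n}1_{V_n}+f_i$ with $f_i\in 1_{V_n}^{\perp}$, and expand $\langle T\mathbf{1}_{S_1},\mathbf{1}_{S_2}\rangle$ into four pieces. The piece involving $1_{V_n}$ in both factors contributes exactly $\frac{|S_1||S_2|}{n^2}\langle T1_{V_n},1_{V_n}\rangle=\frac{1}{n}|S_1||S_2|$, using $T1_{V_n}=1_{V_n}$ and $\|1_{V_n}\|_{\ell^2(V_n)}^2=n$. Both cross pieces vanish: one directly because $T1_{V_n}=1_{V_n}$ and $f_2\perp 1_{V_n}$, and the other after moving $T$ to the other factor using self-adjointness of $T$, which holds because $\mathcal{G}_n$ is undirected. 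This pins the error down to the clean identity
$$\frac{1}{d}\,e(S_1,S_2)-\frac{1}{n}|S_1||S_2|=\langle Tf_1,f_2\rangle_{\ell^2(V_n)}.$$

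The final step is to estimate this remainder. Since $f_1,f_2\in 1_{V_n}^{\perp}$, Cauchy--Schwarz together with Definition~\ref{def:reduced_spectral_radius} yields $|\langle Tf_1,f_2\rangle|\leq \rho(\mathcal{G}_n)\|f_1\|\|f_2\|$. A direct computation gives $\|f_i\|^2=|S_i|-|S_i|^2/n\leq |S_i|$, and the expander hypothesis $\rho(\mathcal{G}_n)\leq 1-\tilde\epsilon$ from Definition~\ref{def:expander} delivers the claimed bound $(1-\tilde\epsilon)\sqrt{|S_1||S_2|}$. I do not anticipate any real obstacle here: the entire argument is a structured unpacking of the two defining properties $T1_{V_n}=1_{V_n}$ and $T=T^*$, the only mild subtlety being the bookkeeping that confirms both cross terms in the expansion actually vanish.
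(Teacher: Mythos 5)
Your proof is correct and is the standard argument: the paper itself only cites this classical result (Corollary~9.2.5 of \cite{alon2000probabilistic}) without reproducing a proof, and your route --- writing $\frac{1}{d}e(S_1,S_2)=\langle T\mathbf{1}_{S_1},\mathbf{1}_{S_2}\rangle_{\ell^2(V_n)}$, splitting each indicator along $1_{V_n}$ and its orthogonal complement, killing the cross terms via $T1_{V_n}=1_{V_n}$ and $T=T^*$, and bounding the remainder by Cauchy--Schwarz together with $\rho(\mathcal{G}_n)\le 1-\tilde\epsilon$ and $\|f_i\|^2\le|S_i|$ --- is exactly the decomposition the paper uses for its quantum analogue (projection onto $\ker(1-T_n)$ plus Cauchy--Schwarz on the restricted operator norm). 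Nothing is missing; the argument is complete as outlined.
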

A structural converse to the above well-known EML is provided in Corollary 4 of \cite{lev2015discrete}. 
\begin{thm}[Converse of EML, \cite{lev2015discrete}]
	\label{thm:Converse_Expander_nixing_lemma}
	Consider the expander sequence defined in accordance to Definition \ref{def:expander}.  
With $e(S_1,S_2)$ as in Theorem \ref{thm:Expander_nixing_lemma}, there exist non-empty subsets $S_1,S_2 \subseteq V_n$ such that 
	\begin{align*}
		\left|\frac{1}{d}e(S_1,S_2)-\frac{1}{n}|S_1| |S_2|\right| \geq \frac{d\cdot \rho(\mathcal{G}_n)}{32\sqrt{2}(\log(2/\rho(\mathcal{G}_n))+4)}\sqrt{|S_1| |S_2|}.
	\end{align*}
\end{thm}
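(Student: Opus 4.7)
The plan is to reinterpret the discrepancy as the bilinear form of a self-adjoint operator whose norm is $\rho(\mathcal{G}_n)$, and then to extract witnessing subsets from a spectral eigenfunction via a carefully calibrated dyadic level-set decomposition. The starting algebraic point is the identity
$$\frac{1}{d} e(S_1, S_2) \;-\; \frac{1}{n} |S_1|\, |S_2| \;=\; \bigl\langle (T - P_{\mathbf{1}})\, \mathbf{1}_{S_1},\, \mathbf{1}_{S_2} \bigr\rangle_{\ell^2(V_n)},$$
where $P_{\mathbf{1}}$ denotes the orthogonal projection onto $\mathrm{span}(1_{V_n})$. Indeed, $\langle T \mathbf{1}_{S_1}, \mathbf{1}_{S_2}\rangle = e(S_1,S_2)/d$ by the definition of $T$ in \eqref{eqn:classical_T}, and $\langle P_{\mathbf{1}}\mathbf{1}_{S_1}, \mathbf{1}_{S_2}\rangle = |S_1||S_2|/n$. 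The operator $T - P_{\mathbf{1}}$ annihilates $1_{V_n}$ and agrees with $T$ on $1_{V_n}^{\perp}$, so its operator norm on $\ell^2(V_n)$ equals $\rho(\mathcal{G}_n)$ by Definition \ref{def:reduced_spectral_radius}.

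Next, I would fix a real unit eigenvector $f \in 1_{V_n}^{\perp}$ achieving $(T - P_{\mathbf{1}}) f = \mu f$ with $|\mu| = \rho(\mathcal{G}_n)$, so that $|\langle (T - P_{\mathbf{1}}) f, f \rangle| = \rho(\mathcal{G}_n)$. To transfer this extremal identity from $f$ to indicator functions, I would split $f$ into its positive and negative parts and decompose each using dyadic level sets of $|f|$: set $A_k := \{v : 2^k \tau \le |f(v)| < 2^{k+1} \tau\}$ for $k \ge 0$, where $\tau$ is a truncation threshold of order $\rho(\mathcal{G}_n)/\sqrt{n}$. Vertices with $|f(v)| < \tau$ contribute at most $\rho(\mathcal{G}_n)/2$ to the quadratic form by Cauchy--Schwarz together with $\|f\|_2 = 1$, while only $O(\log(1/\rho(\mathcal{G}_n)))$ dyadic scales remain above $\tau$. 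Writing the truncated eigenvector as $\tilde{f} = \sum_k c_k \mathbf{1}_{E_k}$ with signed coefficients $c_k$ supported on the surviving level sets $E_k$, bilinearity yields
$$\frac{\rho(\mathcal{G}_n)}{2} \;\le\; \Bigl| \sum_{k,\ell} c_k c_\ell\, \bigl\langle (T - P_{\mathbf{1}})\, \mathbf{1}_{E_k},\, \mathbf{1}_{E_\ell} \bigr\rangle \Bigr|.$$

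Finally, the normalization $\sum_k c_k^2 |E_k| \le \|f\|_2^2 = 1$ together with a weighted pigeonhole over the $O(\log(1/\rho(\mathcal{G}_n)))$ index pairs forces some $(k^{*}, \ell^{*})$ to contribute a fraction at least $1/[C_1 \log(2/\rho(\mathcal{G}_n)) + C_2]$ of the total. Taking $(S_1, S_2) := (E_{k^{*}}, E_{\ell^{*}})$ and reconverting via the identity of the first paragraph produces the desired discrepancy lower bound, with the explicit constants $32\sqrt{2}$ and the shift $+4$ arising from tracking the losses at the truncation, triangle inequality, and pigeonhole steps. The main obstacle is obtaining the claimed logarithmic sharpness: a naive level-set decomposition would involve $\Theta(\log n)$ scales and produce a graph-size--dependent bound. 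The crux of Lev's argument is the calibration of $\tau$ to $\rho(\mathcal{G}_n)/\sqrt{n}$, which simultaneously guarantees that the discarded low-value mass costs only a constant fraction of $\rho(\mathcal{G}_n)$ while leaving only $O(\log(1/\rho(\mathcal{G}_n)))$ surviving scales; once this calibration is in place, the remaining pigeonhole and constant tracking are routine.
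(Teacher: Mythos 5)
The reduction of the discrepancy to the bilinear form $\langle (T-P_{\mathbf 1})\mathbf 1_{S_1},\mathbf 1_{S_2}\rangle$ and your weighted pigeonhole (via $|c_kc_\ell|\sqrt{|E_k||E_\ell|}\le\tfrac12(c_k^2|E_k|+c_\ell^2|E_\ell|)$) are fine, but the step that is supposed to produce $\log(1/\rho(\mathcal G_n))$ instead of $\log n$ does not work. Truncating the unit eigenvector at $\tau\asymp\rho(\mathcal G_n)/\sqrt n$ only bounds the \emph{smallest} surviving entry from below; it gives no control on $\|f\|_\infty$. An extremal eigenvector of $T-P_{\mathbf 1}$ may be strongly localized, with $\|f\|_\infty$ of order $1$, in which case the number of dyadic scales between $\tau$ and $\|f\|_\infty$ is $\Theta(\log(\sqrt n/\rho(\mathcal G_n)))=\Theta(\log n)$, and your pigeonhole yields only a bound of order $\rho(\mathcal G_n)/\log n$, which is not the statement of the theorem. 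So the ``calibration of $\tau$'' you identify as the crux cannot save the argument, and it is not what Lev's proof does.

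The missing idea is that one must give up the extremal eigenvector and instead produce a \emph{near}-maximizer whose ratio of largest to smallest nonzero entry is controlled independently of $n$. This comes from an operator-level fact: $T$ (and hence $T-P_{\mathbf 1}$, up to a constant) has $\ell_1\to\ell_1$ and $\ell_\infty\to\ell_\infty$ norms $O(1)$ while its $\ell_2\to\ell_2$ norm on $1_{V_n}^{\perp}$ is $\rho(\mathcal G_n)$, so its height (in the sense of Definition \ref{def:classical_height}) is $O(1/\rho(\mathcal G_n))$; Lev's Lemma 1 converts this into a vector $z$ with $\|(T-P_{\mathbf 1})z\|_2\ge\tfrac12\rho(\mathcal G_n)\|z\|_2$ and logarithmic diameter polynomial in $1/\rho(\mathcal G_n)$ --- exactly the $n$-independent scale control your truncation cannot supply. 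One then passes to indicator vectors not by a raw dyadic pigeonhole but by Lev's Lemma 4 (the cosine-with-a-binary-vector estimate), applied on each side with only a $\sqrt{\log(1/\rho(\mathcal G_n))}$ loss each; this height/logarithmic-diameter chain is precisely what the present paper transplants to the quantum setting in Lemma \ref{lemma_T_l}, Lemma \ref{lemma:X_P} and Theorems \ref{thm:deltanorm}--\ref{thm:Rnorm} (the classical theorem itself is quoted from Corollary 4 of \cite{lev2015discrete} rather than reproved). Without this mechanism your argument gives at best an $n$-dependent bound, so the proposal has a genuine gap at its central step.
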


\section{Main results}
\label{sec:Main_results}

In this section, we define formally the concept of quantum expanders following \cite{hastings2007random} and \cite{pisier2014quantum}. Before proceeding, it is useful to recall the motivation from classical expanders and to explain how quantum expanders emerge as their noncommutative generalization.

The story could be started with the Cayley graph. Consider $G$ as a finite group generated by $S=\{s_1, \ldots, s_d\}$. Suppose $S$ is symmetric in the sense that $s\in S$ whenever $s^{-1}\in S$, and does not contain the identity $1$ to avoid loops. The Cayley graph $\operatorname{Cay}(G,S)$ is defined as the graph with vertex set $G$ and edge set $\{\{sx,x\}: x\in G, s\in S\}$. Each vertex $x\in G$ is connected to the $|S|=d$ elements $sx$ for $s \in S$ and hence $\operatorname{Cay}(G,S)$ is a $d$-regular graph. A unitary representation of $G$ is the Hilbert space $\ell^2(G)$, together with a homomorphism $\rho: G \rightarrow U(\ell^2(G))$ from $G$ to the group $U(\ell^2(G))$ of unitary transformations on $\ell^2(G)$. Define the (left) regular unitary representation of $G$ as $\pi_G: G\rightarrow U(\ell^2(G))$ such that
\begin{align*}
	(\pi_G(g)f)(x)=f(g^{-1}x)
\end{align*}
for $f\in l^2(G)$ and $x,g \in G$. Unitary operator $\pi_G(s)$ is a unitary induced by the permutation of vertices of the graph. 

To discuss expanders, let us assume that we are given a sequence of Cayley graphs ${\operatorname{Cay}(G_n,S_n)}_{n=1}^{\infty}$, where $S_n=\{s_1(n), \ldots, s_d(n)\} \subset G_n$ for each $n$ with $d$ being a positive integer independent of $n$.
Then $\{\operatorname{Cay}(G_n,S_n)\}_{n=1}^{\infty}$ where $$N_n:=|G_n|\rightarrow\infty \quad\text{as}\quad n \rightarrow\infty,$$ is an expander sequence if and only if the sequence of $d$-tuples of unitaries 
\begin{align*}
	\Big\{\big(\pi_{G_n}(s_1(n)),\ldots, \pi_{G_n}(s_d(n))\big) \in U(l^2(G_n))^d \Big\}_{n=1}^{\infty}
\end{align*} 
satisfies that with $\tilde{\epsilon}>0$,
\begin{align*}
	\sup_n\left\|\Bigg(\frac{1}{d}\sum_{i=1}^d\pi_{G_n}(s_i(n))\Bigg)\Bigg|_{1_{G_n}^\perp}\right\|_{2 \to 2} \leq 1-\tilde{\epsilon},
\end{align*}
where $\|\cdot\|_{2 \to 2}$ is the operator norm defined as
\[
\|T\|_{2 \to 2}
\;:=\;
\sup_{X \neq 0}
\frac{\|T(X)\|_2}{\|X\|_2}
\;=\;
\sup_{\|X\|_2 = 1} \|T(X)\|_2 ,
\]
with $\|X\|_2 = (\operatorname{tr}(X^\ast X))^{1/2}$ denoting the
Hilbert--Schmidt norm.
This is equivalent to the condition that there exists a sequence of $d$-tuples of unitaries
\begin{align*}
	\Big\{\big (u_1^{(n)},\ldots , u_d^{(n)}\big) \in U(N_n)^d	\Big\}_{n=1}^{\infty}\quad\text{with}\quad u_j^{(n)}=\pi_{G_n}(s_j(n)),
\end{align*}
such that for some $\tilde{\epsilon} > 0$ and every $n \geq 1$, it holds for any $N_n \times N_n$ diagonal complex matrix $x$ that
\begin{align}
	\label{eqn:spectral_gap}
	\left\|	\frac{1}{d}\sum_{j=1}^{d} u_j^{(n)} \Big(x-\frac{1}{N_n}\tr(x)\Big)u_j^{(n)*}\right\|_2\leq (1-\tilde{\epsilon})\Bigg\|	x-\frac{1}{N_n}\tr(x)\Bigg\|_2,
\end{align}
where $U(N_n)$ stands for the set of $N_n\times N_n$-dimensional unitary matrices. 
The operator $\frac{1}{d}\sum_{i=1}^d \pi_{G_n}(s_i(n))$ in the classical setting can be viewed as acting on vectors (functions on the vertices $l^2(G_n)$). The diagonal matrix $x$ can be identified with a vector in $l^2(G_n)$ by considering its diagonal entries. The operation $x \mapsto \frac{1}{d}\sum_{j=1}^d u_j^{(n)}x u_j^{(n)*}$ is an example of a quantum channel, which acts on matrices.
When the non-commutative object $x$ is restricted to being a diagonal matrix, the non-commutative averaging operation $x \mapsto \frac{1}{d}\sum_{j=1}^d u_j^{(n)}x u_j^{(n)*}$ simplifies to the classical Markov operator on the graph, making the two conditions mathematically equivalent for Cayley graphs. 


The term ``quantum expander'' is just to designate unitaries $\big (u_1^{(n)},\ldots , u_d^{(n)}\big)$ to satisfy \eqref{eqn:spectral_gap} for a general $x\in M(N_n)$ beyond being diagonal, where $M(N_n)$ stands for the space of $N_n\times N_n$-dimensional complex matrices.  In this light, quantum expanders can be seen as a non-commutative version of the classical ones. 
Identify $M(N_n)$ with the space $B(\ell_2^{N_n})$ of bounded operators on the $N_n$-dimensional Hilbert space $\ell_2^{N_n}$. In contrast to the classical Markov operator $T$ defined in \eqref{eqn:classical_T}, in the quantum setting we consider, for each $n$, the operator
$$T_n = \frac{1}{d} \sum_{j=1}^d u_j^{(n)} \otimes \overline{u_j^{(n)}} \quad \text{acting on} \quad \ell_2^{N_n} \otimes \overline{\ell_2^{N_n}},$$
where $u_j^{(n)} \in U(N_n)$ and $\overline{u_j^{(n)}}$ denotes the complex conjugate of the matrix $u_j^{(n)}$. We identify $\ell_2^{N_n} \otimes \overline{\ell_2^{N_n}}$ with
\[
S_2^{N_n} := \bigl(M(N_n), \langle x,y\rangle = \mathrm{tr}(y^*x)\bigr),
\]
that is, the Hilbert space of $N_n\times N_n$ matrices equipped with the
Hilbert--Schmidt inner product and norm. Thus,
\begin{align*}
\Bigg\|\frac{1}{d}\sum_{j=1}^d  u_j^{(n)}\otimes \overline{u}_j^{(n)}: \ell_2^{N_n}\otimes \overline{\ell_2^{N_n}}\rightarrow \ell_2^{N_n}\otimes \overline{\ell_2^{N_n}} \Bigg\|_{2 \to 2}=\left\|T_n: S_2^{N_n}\rightarrow S_2^{N_n}\right \|_{2 \to 2}.
\end{align*}
Accordingly, we may view $T_n$ as an operator on $M(N_n)$ defined by
\begin{align}
	\label{eqn:Tn_expression}
T_n(\eta)= \frac{1}{d}\sum_{j=1}^d u_j^{(n)}\eta u_j^{(n)*}, \qquad \forall\eta\in M(N_n),
\end{align}
which satisfies that
\begin{align*}
	&\hspace{-0.5cm}\Bigg\|\frac{1}{d}\sum_{j=1}^d  u_j^{(n)}\otimes \overline{u}_j^{(n)} \Bigg\|_{2 \to 2}\\
	=&\sup \Bigg\{ \Bigg\|\frac{1}{d}\sum_{j=1}^d  u_j^{(n)}\eta u_j^{(n)*}\Bigg \|_2\;\Bigg | \; \eta\in M(N_n),\, \|\eta\|_2\leq 1  \Bigg\}\\
	=&\sup \Bigg\{ \frac{1}{d}\Bigg |\sum_{j=1}^d  \tr(u_j^{(n)}\eta u_j^{(n)*}\zeta^*)\Bigg |\;\Bigg | \; \eta, \zeta\in M(N_n),\, \|\eta\|_2\leq 1,\, \|\zeta\|_2\leq 1  \Bigg\}.
\end{align*}

Analogous to the classical setting, we have the operator norm $\|T_n\|_{2 \to 2} = 1$ and $T_n(I_{N_n})=I_{N_n}$, where $I_{N_n}\in M(N_n)$ is the identity matrix. Then we ask how much the operator norm of the quantum operator $T_n$ would be decreased if it is restricted to the orthogonal complement $I_{N_n}^{\perp}$ of the identity matrix. For that purpose, we give the quantum version of Definition \ref{def:reduced_spectral_radius}.
\begin{dfn}
	\label{def:reducedspectralradius}
	Define the reduced spectral radius $\rho(n)$ as the restricted operator norm
\begin{align*}
	\rho(n):=\big\|T_n|_{I_{N_n}^{\perp}}\big\|_{2 \to 2}.
\end{align*}
\end{dfn}
The definition of a quantum expander is defined in terms of $\rho(n)$. It says that there exists $\tilde{\epsilon}>0$ such that $\rho(n)\leq 1-\tilde{\epsilon}$ for all $n$, which is in consistent with the classical  expander sequence in Definition \ref{def:expander}. 
\begin{dfn}[\cite{hastings2007random}, \cite{pisier2014quantum}]
	\label{def:quantum_expander}
	Fix a positive integer $d$ and a sequence of positive integers $\{N_n\}$ with
	$N_n \to \infty$ as $n \to \infty$.  
	We say that a sequence of $d$-tuples of unitaries
	\[
	\Big\{\big(u_1^{(n)},\ldots,u_d^{(n)}\big) \in U(N_n)^d\Big\}_{n=1}^\infty
	\]
	is a quantum expander sequence, if the eigenvalue $1$ of $T_n$ has multiplicity one and the reduced spectral radius $\rho(n)$ is  smaller than $1$ uniformly for all $n$.
\end{dfn}

The following two theorems are our main results of this paper and their proofs are postponed to 
Section \ref{sec:proofs}. 
\begin{thm}[Quantum EML]
	\label{thm:Quantum_expander_nixing_lemma}
	Consider a quantum expander sequence defined in accordance to Definition \ref{def:quantum_expander}. 
	For all $n$ and any two orthogonal projections $P_1,P_2 \in M(N_n)$, we have  
	\begin{align*}
	\left|\langle P_1,T_nP_2 \rangle_{\HS}-\frac{1}{N_n}\tr(P_1)\tr(P_2)\right| \leq (1-\tilde{\epsilon})\sqrt{\tr(P_1)\tr(P_2)},
	\end{align*}
where the Hilbert–Schmidt inner product of two matrices $A$ and $B$ is defined as 
$$ \langle A,B\rangle _{\HS}=\tr (A^{*}B).$$
\end{thm}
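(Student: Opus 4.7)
The plan is to mimic the classical proof of the EML: decompose each projection into its component along the identity (the ``constant'' part) and its component in $I_{N_n}^{\perp}$ (the ``fluctuation'' part), expand the Hilbert--Schmidt inner product bilinearly, and observe that the identity terms exactly cancel the correction $\frac{1}{N_n}\tr(P_1)\tr(P_2)$. What remains is an inner product of two trace-zero matrices on which the reduced spectral radius hypothesis can be applied directly.

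Concretely, I would write $P_i = \alpha_i I_{N_n} + Q_i$ with $\alpha_i = \tr(P_i)/N_n$, so that $Q_i \in I_{N_n}^{\perp}$ (that is, $\tr(Q_i) = 0$). Using $T_n(I_{N_n}) = I_{N_n}$, trace preservation of $T_n$ (which follows from $\tr(u_j^{(n)} \eta u_j^{(n)*}) = \tr(\eta)$ applied term by term in \eqref{eqn:Tn_expression}), and $\langle I_{N_n}, I_{N_n}\rangle_{\HS} = N_n$, the bilinear expansion
\[
\langle P_1, T_n P_2 \rangle_{\HS}
= \alpha_1 \alpha_2 \langle I_{N_n}, I_{N_n}\rangle_{\HS}
+ \alpha_1 \langle I_{N_n}, T_n Q_2\rangle_{\HS}
+ \alpha_2 \langle Q_1, T_n I_{N_n}\rangle_{\HS}
+ \langle Q_1, T_n Q_2 \rangle_{\HS}
\]
collapses so that the first term becomes $\tr(P_1)\tr(P_2)/N_n$ and the two cross terms vanish (they reduce to $\tr(Q_2)$ and $\tr(Q_1)$ respectively). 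Therefore
\[
\langle P_1, T_n P_2\rangle_{\HS} - \frac{1}{N_n}\tr(P_1)\tr(P_2)
= \langle Q_1, T_n Q_2\rangle_{\HS}.
\]

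To finish, I would apply Cauchy--Schwarz and the definition of $\rho(n)$ in Definition \ref{def:reducedspectralradius}:
\[
\bigl|\langle Q_1, T_n Q_2\rangle_{\HS}\bigr|
\;\le\; \|Q_1\|_2\,\|T_n Q_2\|_2
\;\le\; \rho(n)\,\|Q_1\|_2\,\|Q_2\|_2,
\]
valid since $Q_1, Q_2 \in I_{N_n}^{\perp}$. The projection identity $P_i^2 = P_i$ yields $\|P_i\|_2^2 = \tr(P_i)$, and Pythagoras with the orthogonal decomposition $P_i = \alpha_i I_{N_n} + Q_i$ gives
\[
\|Q_i\|_2^2 = \tr(P_i) - \frac{\tr(P_i)^2}{N_n} \le \tr(P_i).
\]
Combining these with $\rho(n) \le 1 - \tilde{\epsilon}$ from the quantum expander hypothesis yields the stated inequality.

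There is no real obstacle; the only point that requires a touch of care is verifying that the cross terms vanish, which hinges on $T_n$ being both unital and trace-preserving (so that the restriction to $I_{N_n}^{\perp}$ is well-defined and consistent with the decomposition). Everything else is Cauchy--Schwarz plus the projection identity $\|P_i\|_2^2 = \tr(P_i)$.
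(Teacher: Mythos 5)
Your proof is correct and follows essentially the same route as the paper: remove the identity (mean) component, reduce the deviation to an inner product involving trace-zero matrices, and apply Cauchy--Schwarz together with the bound $\rho(n)\le 1-\tilde{\epsilon}$ and $\|P_i\|_2^2=\tr(P_i)$. Your explicit orthogonal decomposition $P_i=\alpha_i I_{N_n}+Q_i$ with vanishing cross terms is simply a cleaner bookkeeping of the step the paper carries out via the projection $E$ and the operator $T_n-E$ applied to $P_2$.
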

We now  briefly explain why Theorem \ref{thm:Quantum_expander_nixing_lemma} is a natural
non-commutative analogue of the classical EML in Theorem \ref{thm:Expander_nixing_lemma} in the remark below.
\begin{remark}
	\label{remark:EML_quantum}
In the classical setting, if $T=\frac{1}{d}A$ denotes the normalized adjacency
operator of a $d$-regular graph $\mathcal{G}_n=(V_n,E_n)$ where $|V_n|=n$, then for any subsets
$S_1,S_2 \subseteq V_n$ one has
\[
\langle 1_{S_1}, T 1_{S_2} \rangle_{\ell^2(V)}
= \frac{1}{d} e(S_1,S_2),
\]
so the inner product of indicator functions encodes the number of edges between
the corresponding vertex subsets. 
In the quantum setting, indicator functions of vertex subsets are replaced by orthogonal projections
$P_1,P_2 \in M(N_n)$, and the normalized adjacency operator is replaced by the quantum
channel
\[
T_n(x) = \frac{1}{d}\sum_{j=1}^d u_j^{(n)} x u_j^{(n)\,*}.
\]
The Hilbert--Schmidt inner product
$\mathrm{tr}\big(P_1 T_n(P_2)\big)$ plays the same role as the classical inner
product above, while the quantity
$\frac{1}{N_n}\mathrm{tr}(P_1)\mathrm{tr}(P_2)$ corresponds to the random
baseline $\frac{1}{n}|S_1||S_2|$. In this way, Theorem \ref{thm:Quantum_expander_nixing_lemma} is the direct quantum analogue of
the classical EML.
\end{remark}

\begin{thm}[Converse of quantum EML]
	\label{thm:Quantum_expander_nixing_lemma_c}
	Consider a quantum expander sequence defined in accordance to Definition \ref{def:quantum_expander}. 
	For all $n$, there exist an universal constant $C>0$ and two non-zero orthogonal projections $P_1,P_2 \in M(N_n)$, such that  the following holds:
	\begin{align*}
		\left|\langle P_1,T_nP_2 \rangle_{\HS}-\frac{1}{N_n}\tr(P_1)\tr(P_2)\right| > \frac{\rho(n)}{C(-\log(\rho(n))+1)} \sqrt{\tr(P_1)\tr(P_2)}.
	\end{align*}
\end{thm}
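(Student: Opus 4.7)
The plan is to develop a noncommutative analogue of the strategy used by \cite{lev2015discrete} to prove Theorem~\ref{thm:Converse_Expander_nixing_lemma}, with vertex subsets replaced by spectral projections and level-set decompositions of functions replaced by spectral decompositions of self-adjoint operators. First, using Definition~\ref{def:reducedspectralradius} and the variational description of the operator norm, I would choose unit Hilbert--Schmidt norm, trace-zero matrices $\eta,\zeta \in I_{N_n}^{\perp}$ for which $|\langle \zeta, T_n(\eta)\rangle_{\HS}|$ is arbitrarily close to $\rho(n)$. Replacing $\eta$ by $\tfrac{1}{2}(\eta+\eta^{*})$ or $\tfrac{1}{2i}(\eta-\eta^{*})$ (whichever retains a definite fraction of the bilinear form), and similarly for $\zeta$, one may assume both are self-adjoint; decomposing $\eta=\eta_+-\eta_-$ and $\zeta=\zeta_+-\zeta_-$ into positive and negative parts then reduces the task, at the cost of a universal constant, to lower-bounding $|\langle \zeta_\pm, T_n(\eta_\pm)\rangle_{\HS}|$ for pairs of positive operators.

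The second stage passes from positive operators to orthogonal projections via spectral integration. For any positive $A$ with operator norm at most $1$, the spectral theorem gives $A=\int_0^{1} E_A(t)\,dt$ with $E_A(t):=\mathbf{1}_{\{A>t\}}$; inserting this for $\eta_+$ and $\zeta_+$ yields a double integral representation $\int_0^1\!\!\int_0^1 \langle E_{\zeta_+}(s), T_n(E_{\eta_+}(t))\rangle_{\HS}\,ds\,dt$ of the bilinear form. I would dyadically discretize this integral in both variables and truncate the dyadic scales at which the spectral projections have trace smaller than a threshold comparable to $\rho(n)$, since those contributions can be absorbed via $\|T_n\|_{2\to 2}=1$. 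This retains $O(-\log(\rho(n))+1)$ essentially distinct dyadic scales per variable, and a pigeonhole selection across the $O((-\log(\rho(n))+1)^2)$ resulting pairs produces a single pair $(P_1,P_2)$ of non-zero spectral projections carrying a deviation of order $\rho(n)/(-\log(\rho(n))+1)$, once the centering by $\frac{1}{N_n}\tr(P_1)\tr(P_2)$ is recovered from the trace-zero normalization of $\eta,\zeta$.

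The main obstacle, and the reason this is not a direct translation of the classical argument, is controlling how much is lost when the optimal trace-zero witnesses $(\eta_+,\zeta_+)$ are replaced by $\{0,1\}$-valued spectral projections under the noncommutative averaging channel $T_n$. This is exactly the role of the machinery advertised in the introduction: the quadrilinear form $T_n(U_1,U_2,V_1,V_2)$ in~\eqref{eqn:T_fun} isolates the relevant bilinear-on-bilinear structure of $T_n$; the Schatten-$p$ notion of height in Definition~\ref{def:Schatten_height} and its quantum counterpart in Definition~\ref{def:classical_height} quantify the cost of compressing a general operator into a projection-valued combination; and Propositions~\ref{prop:T_map_connection}--\ref{prop:height_equal} identify the two heights so that the noncommutative decomposition result of \cite{gillespie1991noncommutative} can be invoked. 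Lemmas~\ref{thm:deltanorm}--\ref{thm:Rnorm} then supply the Hilbert--Schmidt estimates needed to control the dyadic truncation loss and deliver the final denominator $C(-\log(\rho(n))+1)$. I expect the hardest technical step will not be the pigeonhole or the dyadic splitting themselves but rather establishing the height equivalence in Propositions~\ref{prop:T_map_connection}--\ref{prop:height_equal}, since this is the linchpin that allows the classical Lev--Gillespie argument to be lifted to the noncommutative setting without a worse-than-logarithmic loss.
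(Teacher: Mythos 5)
Your outline follows Lev's classical strategy (near-optimal witnesses, symmetrization, layer-cake spectral decomposition, dyadic pigeonhole), but the step on which the whole theorem hinges is exactly the one you leave unjustified: the claim that after truncating scales whose spectral projections have trace ``comparable to $\rho(n)$'' only $O(-\log\rho(n)+1)$ dyadic scales remain. That is false for an arbitrary near-maximizer: a unit Hilbert--Schmidt witness can spread its mass over dimension-many dyadic spectral scales, each retaining non-negligible $\ell_2$ mass, and discarding small scales with $\|T_n\|_{2\to2}=1$ only controls the total if you use summable weights, in which case the number of retained scales is bounded by something like $\log N_n$, not $\log(1/\rho(n))$. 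So your pigeonhole delivers at best a dimension-dependent loss, which is strictly weaker than the statement. The missing ingredient is the one the paper's proof is built on: because $T_n$ is a unital quantum channel, $\|T_n\|_{1\to1}=\|T_n\|_{\infty\to\infty}=1$ (Lemma~\ref{lemma:CPTP}), hence $h\big(T_n|_{I_{N_n}^{\perp}}\big)\le 1/\rho(n)$, and it is this height bound --- fed through Proposition~\ref{prop:T_map_connection} into Lev's Lemma~1 to produce a near-maximizer of \emph{bounded logarithmic diameter} (Lemma~\ref{lemma_T_l}), and then through Lemma~\ref{lemma:X_P} to round to projections with only a $\sqrt{\log}$ loss, applied on both the input and output sides in Lemma~\ref{thm:deltanorm} and Lemma~\ref{thm:Rnorm} --- that produces the denominator $C(-\log\rho(n)+1)$. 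You never invoke the $1\to1$ and $\infty\to\infty$ norms of the channel at all, and your description of the supporting results is off in ways that matter: Definition~\ref{def:classical_height} is the classical ($\ell_p$-induced) height, not a quantum counterpart; \cite{gillespie1991noncommutative} supplies only the interpolation inequality $\|T_n\|_{2\to2}^2\le\|T_n\|_{1\to1}\|T_n\|_{\infty\to\infty}$, not a decomposition theorem; and Proposition~\ref{prop:height_equal} is $h(T_n)=h(T_n^*)$, needed so the witness can be built for $T_n^*$ and paired back against a projection.

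A secondary point: the paper's actual proof contains no explicit dyadic pigeonhole at that level --- it bounds $h(T_n|_{I_{N_n}^\perp})$, applies the general Theorem~\ref{thm:Rnorm} to the map with the kernel projection $E$ removed, and recovers the baseline from $\langle P_1,EP_2\rangle_{\HS}=\tfrac{1}{N_n}\tr(P_1)\tr(P_2)$; your centering via ``trace-zero normalization of $\eta,\zeta$'' after passing to positive parts would need a comparable argument to be made precise, but that is repairable. The dimension-free logarithmic loss is not repairable within your scheme as stated; you would have to actually prove (not just cite) a noncommutative bounded-logarithmic-diameter lemma, which is precisely the content you deferred.
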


\section{Proofs of the paper}
\label{sec:proofs}
The proof of Theorem~\ref{thm:Quantum_expander_nixing_lemma} is presented in Section~\ref{sec:Quantum_expander_nixing_lemma}.
Preparatory results for the proof of Theorem~\ref{thm:Quantum_expander_nixing_lemma_c} are collected in Section~\ref{sec:P_Quantum_expander_nixing_lemma_c}, while the proof of Theorem~\ref{thm:Quantum_expander_nixing_lemma_c} itself appears in Section~\ref{sec:Quantum_expander_nixing_lemma_c}.

\subsection{Proof of Theorem \ref{thm:Quantum_expander_nixing_lemma}}
\label{sec:Quantum_expander_nixing_lemma}
\begin{proof}
	Let $E: M(N_n) \rightarrow M(N_n)$ be the orthogonal projection onto the space $\langle I_{N_n} \rangle=\ker(1-T_n)$. Then $$E(P_2)=\frac{\tr(P_2)}{N_n}I_{N_n}.$$
	Therefore,
	\begin{align}
		\label{eqn:P_1_2_HS}
		\tr(P_1)\tr(P_2)=\langle P_1, N_n E(P_2)\rangle_{\HS}.
	\end{align}
We have by the Cauchy-Schwarz inequality and properties of the unitaries that
	\begin{align*}
		\left|\langle P_1,T_nP_2\rangle_{\HS}-\frac{1}{N_n}\tr(P_1)\tr(P_2)\right|^2&=\Big|\langle P_1,(T_n-E)P_2\rangle_{\HS}\Big|^2\\
		&\leq \tr(P_1^*P_1)\tr\Big(\big((T_n-E)P_2\big)^*\big( (T_n-E)P_2 \big)\Big).
	\end{align*}	
Note that when $P_2\in \langle I_{N_n} \rangle$, we have 
\begin{align}
	\label{eqn:reason1}
(T_n-E)P_2=P_2-P_2=0;
\end{align}
when $P_2\notin \langle I_{N_n} \rangle$, we have 
\begin{align}
	\label{eqn:reason2}
	(T_n-E)P_2=T_n(P_2)-0=T_n(P_2).
\end{align}
Then by the properties of the unitaries, we obtain
\begin{align*}
	\left|\langle P_1,T_nP_2\rangle_{\HS}-\frac{1}{N_n}\tr(P_1)\tr(P_2)\right|^2
	&\leq \tr(P_1) \big\|T_n|_{I_{N_n}^{\perp}}\big\|^2 \tr(P_2)\\
	&\leq (1-\tilde{\epsilon})^2\tr(P_1)\tr(P_2),
\end{align*}	
where the last inequality holds by the definition of a quantum expander sequence in Definition \ref{def:quantum_expander}.
\end{proof}

\subsection{Preparation for the Proof of Theorem \ref{thm:Quantum_expander_nixing_lemma_c}}
\label{sec:P_Quantum_expander_nixing_lemma_c}
 For $p\in [1,\infty )$, denote the Schatten-$p$ norm of $A\in M(N_n)$ as $\|A\|_p$; 
for $p=\infty$, the convention is to define $\|\cdot\|_{\infty}$ as the operator norm. 
The Schatten-$p$ norms are unitarily invariant in the sense that for unitaries $U, V$ and  $p\in [1,\infty ]$,
\begin{align}
	\label{eqn:unitarily_invariant}
	\|UAV\|_{p}=\|A\|_{p}.
\end{align}
A well-known result is
\begin{equation}
\label{eqn:classical_ineq}
\|A\|_2^2 \leq \|A\|_1 \|A\|_\infty,
\end{equation}
where $\|\cdot\|_1$ is also called the trace norm or the nuclear norm, $\|\cdot\|_2$ is also called the Frobenius norm or the Hilbert–Schmidt norm, and $\|\cdot\|_\infty$ is also called spectral norm. 
By \eqref{eqn:classical_ineq}, the Schatten-$p$ norm induced height of any matrix defined below is at least $1$.
\begin{dfn}
	\label{def:Schatten_height}
	Define the Schatten height of a non-zero complex matrix $A$ by
	\begin{align*}
		\widetilde{h}(A):=\frac{\sqrt{\|A\|_{1} \|A\|_{\infty}}}{\|A\|_{2}}.
	\end{align*}
\end{dfn}

We define the Schatten-$p$ norm induced operator norm as $\|T_n\|_{p\rightarrow p}$ for operator 
$T_n:M(N_n)\rightarrow M(N_n)$, i.e., 
\begin{align}
	\label{eqn:p_to_p}
\|T_n\|_{p\to p}
\;=\;
\sup_{X\in M(N_n),\,X\neq 0}
\frac{\|T_n(X)\|_p}{\|X\|_p}
\;=\;
\sup_{\|X\|_p=1}\,\|T_n(X)\|_p.
\end{align}
The next lemma is a special case of  \cite{gillespie1991noncommutative} on page 226 therein.
\begin{lemma}[\cite{gillespie1991noncommutative}]
	\label{inequalityofoperatornorms}
	For any linear operator $T_n:M(N_n)\rightarrow M(N_n)$, we have
	\begin{align*}
		\|T_n\|_{2\rightarrow 2}^2\leq \|T_n\|_{1 \rightarrow 1}\|T_n\|_{\infty\rightarrow \infty}.
	\end{align*}
\end{lemma}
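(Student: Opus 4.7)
The plan is to prove this inequality via the classical complex interpolation method of Riesz--Thorin/Stein applied to the Schatten scale. Since for a finite-dimensional Hilbert space the Schatten classes $S_p(M(N_n))$ form a complex interpolation family with $[S_1,S_\infty]_{1/2}=S_2$, the bound $\|T_n\|_{2\to 2}\le \|T_n\|_{1\to 1}^{1/2}\|T_n\|_{\infty\to\infty}^{1/2}$ is the quantitative statement of this interpolation. I would present a direct proof using Hadamard's three-lines lemma rather than quoting the interpolation theorem as a black box, because the construction is short and transparent in the matrix setting.

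The first step is a standard duality reduction. Write $M_1:=\|T_n\|_{1\to 1}$ and $M_\infty:=\|T_n\|_{\infty\to\infty}$. By the duality between the trace norm and the operator norm (and the Hilbert--Schmidt self-duality), I would reduce the claim to showing that
\[
\bigl|\tr\bigl(Y^{*}\,T_n(X)\bigr)\bigr|\;\le\;\sqrt{M_1 M_\infty},
\]
for all $X,Y\in M(N_n)$ with $\|X\|_2=\|Y\|_2=1$, since $\|T_n(X)\|_2=\sup_{\|Y\|_2=1}|\tr(Y^{*}T_n(X))|$.

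The second step is the analytic interpolation construction. Take polar/singular value decompositions $X=U|X|$ with $|X|=\sum_k s_k E_k$ and $Y^{*}=V|Y^{*}|$ with $|Y^{*}|=\sum_l t_l F_l$, where $E_k,F_l$ are the spectral projections (in finite dimension this is legitimate and the sums are finite). For $z$ in the closed strip $\Sigma=\{z\in\mathbb{C}:0\le\Re z\le 1\}$, I would define the entire-function-valued families
\[
X(z)\;=\;U\sum_{k:s_k>0}s_k^{2(1-z)}E_k,\qquad Y(z)^{*}\;=\;\Bigl(\sum_{l:t_l>0}t_l^{2z}F_l\Bigr)V^{*},
\]
so that $X(1/2)=X$ and $Y(1/2)=Y$. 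By construction one checks: on the line $\Re z=0$, $\|X(z)\|_1\le\sum_k s_k^{2}=1$ and $\|Y(z)\|_\infty\le 1$; on the line $\Re z=1$, $\|X(z)\|_\infty\le 1$ and $\|Y(z)\|_1\le\sum_l t_l^{2}=1$. All four bounds use the unitary invariance of the Schatten norms, the triangle inequality and $|s^{2z}|=s^{2\Re z}$ for $s>0$.

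The third step is to apply Hadamard's three-lines lemma to the scalar function $f(z):=\tr\bigl(Y(z)^{*}\,T_n(X(z))\bigr)$. It is holomorphic on the open strip and continuous and bounded on $\bar\Sigma$ (finite-dimensional analytic matrix-valued combinations). On $\Re z=0$, H\"older's trace inequality gives
\[
|f(z)|\;\le\;\|Y(z)\|_\infty\,\|T_n(X(z))\|_1\;\le\;M_1,
\]
while on $\Re z=1$ the same inequality with the roles reversed gives $|f(z)|\le M_\infty$. Hadamard's lemma then yields $|f(1/2)|\le M_1^{1/2}M_\infty^{1/2}$, which is precisely the desired estimate for $|\tr(Y^{*}T_n(X))|$; squaring the resulting supremum proves the lemma. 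The only subtle point, and the main place to be careful, is the analytic family construction when $X$ or $Y$ is rank-deficient, which I handle by restricting the spectral sums to the positive part of the spectrum so that $s_k^{2z}$ and $t_l^{2z}$ are entire in $z$; everything else is a routine book-keeping exercise.
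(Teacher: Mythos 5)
Your argument is correct, and it is worth noting that the paper itself does not prove this lemma at all: it is stated as a special case of a noncommutative interpolation theorem of Gillespie (1991) and cited as such. What you have written is essentially a self-contained, finite-dimensional instance of the complex-interpolation argument that underlies the cited result, so the mathematical content agrees with the source the paper relies on; the value of your version is that it makes the lemma independent of the reference. The duality reduction, the boundary bounds $\|X(z)\|_1\le 1$, $\|Y(z)\|_\infty\le 1$ on $\Re z=0$ and $\|X(z)\|_\infty\le 1$, $\|Y(z)\|_1\le 1$ on $\Re z=1$, the H\"older estimates, and the application of the three-lines lemma at $z=1/2$ are all sound (with the spectral projections taken rank-one so that $\sum_k s_k^2=\|X\|_2^2=1$ accounts correctly for multiplicities). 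The only slip is a bookkeeping one in the polar decomposition of $Y$: you write $Y^{*}=V|Y^{*}|$ (isometry on the left) but then define $Y(z)^{*}=\bigl(\sum_l t_l^{2z}F_l\bigr)V^{*}$ with the isometry on the right, so $Y(1/2)^{*}$ does not reduce to $Y^{*}$ as written. This is fixed by using the decomposition $Y=V|Y|$, hence $Y^{*}=|Y|V^{*}=\bigl(\sum_l t_l F_l\bigr)V^{*}$ with $F_l$ the spectral projections of $|Y|$; none of the boundary norm estimates change, since they only use unitary invariance and $|t^{2z}|=t^{2\Re z}$.
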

We define the height of a linear operator using Schatten-$p$ norm induced operator norms. By Lemma \ref{inequalityofoperatornorms}, the height of any  linear operator is at least one.
\begin{dfn}
	\label{def:height}
	For any linear operator $T_n:M(N_n)\rightarrow M(N_n)$, define its height by
	\begin{align*}
		h(T_n):=\frac{\sqrt{\|T_n\|_{1\rightarrow 1} \|T_n\|_{\infty\rightarrow \infty}}}{\|T_n\|_{2\rightarrow 2}}.
	\end{align*}
\end{dfn}

One different way to define height for a matrix is through operator norms induced by $l_p$. 
For a complex $m\times n$-dimensional matrix $A$, define the $l_p$ norm induced operator norm as
\begin{align}
	\label{eqn:lp_to_lp}
	\|A\|_{l_p\to l_p}
	\;=\;
	\sup_{x\in \mathbb{C}^{n},\,x\neq 0}
	\frac{\|Ax\|_{l_p}}{\|x\|_{l_p}}.
\end{align}
In contrast to the Schatten height in Definition~\ref{def:Schatten_height}, \cite{lev2015discrete} defines the height of a matrix as follows. It is well known that the height of any matrix is at least one.
\begin{dfn}[\cite{lev2015discrete}]
	\label{def:classical_height}
	Define the height of a non-zero complex matrix $A$ by
	\begin{align*}
		h(A):=\frac{\sqrt{\|A\|_{l_1\to l_1} \|A\|_{l_\infty\to l_\infty}}}{\|A\|_{l_2\to l_2}}.
	\end{align*}
\end{dfn}

The following proposition establishes a connection between the Schatten-$p$ induced operator norm of a linear map and the $l_p$ induced operator norm of its transformed form. We follow the convention that when the $\diag$ operator is applied to a vector, it yields the corresponding diagonal matrix, and when applied to a matrix, it produces the corresponding diagonal vector. 
\begin{prop}
	\label{prop:T_map_connection}
	Let $T_n:M(N_n)\to M(N_n)$ be a linear map, and for any unitaries $U_1,U_2,V_1,V_2\in M(N_n)$ define the associated linear map
	\begin{align}
		\label{eqn:T_fun}
		T_n(U_1,U_2,V_1,V_2):\mathbb{C}^{N_n} &\to \mathbb{C}^{N_n} \nonumber\\
		(\lambda_1,\ldots,\lambda_{N_n}) &\mapsto 
		\diag\big(U_2\,T_n(U_1 D_{\lambda} V_1)\,V_2\big),
	\end{align}
	where $D_{\lambda}$ is the diagonal matrix $\diag(\lambda)$ with $\lambda=(\lambda_1,\ldots \lambda_{N_n})$.
	Then, for every $1\le p\le\infty$, the Schatten-$p$ norm induced operator norm of $T_n$ satisfies
	\begin{align}
		\label{eqn:T_map_connection}
		\|T_n\|_{p\to p}
		=\sup_{U_1,U_2,V_1,V_2}\,
		\big\|T_n(U_1,U_2,V_1,V_2)\big\|_{\ell_p \to \ell_p}.
	\end{align}
\end{prop}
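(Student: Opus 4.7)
The overall plan is to combine the singular value decomposition (SVD) for the input with a ``diagonal pinching'' representation of the Schatten-$p$ norm of the output, and then to fold the two resulting suprema into a single supremum over four unitaries. The auxiliary operator $T_n(U_1,U_2,V_1,V_2)$ is constructed precisely so that $U_1,V_1$ encode the SVD of the input matrix while $U_2,V_2$ encode the SVD (or a near-SVD) of the image.

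The key lemma I would establish first is the representation
\[
\|Y\|_p \;=\; \sup_{U_2,V_2\in U(N_n)}\bigl\|\diag(U_2 Y V_2)\bigr\|_{\ell_p},\qquad Y\in M(N_n),\ 1\le p\le \infty.
\]
The inequality ``$\ge$'' is saturated by taking $U_2,V_2$ to be the unitary factors of the SVD of $Y$, so that $U_2 Y V_2 = D_\sigma$ and $\|\diag(U_2 Y V_2)\|_{\ell_p}=\|\sigma\|_{\ell_p}=\|Y\|_p$. The inequality ``$\le$'' reduces, via unitary invariance \eqref{eqn:unitarily_invariant}, to the pinching contraction $\|\diag(M)\|_{\ell_p}\le \|M\|_p$, which I would prove uniformly in $p$ by trace duality: with $1/p+1/q=1$,
\[
\|\diag(M)\|_{\ell_p} \;=\; \sup_{\|z\|_{\ell_q}\le 1}\bigl|\tr(D_z M)\bigr|
\;\le\; \sup_{\|Z\|_{q}\le 1}\bigl|\tr(Z M)\bigr| \;=\; \|M\|_p,
\]
since $\|D_z\|_q=\|z\|_{\ell_q}$ makes every admissible diagonal $D_z$ an admissible Schatten test matrix.

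On the input side, every $X\in M(N_n)$ admits a representation $X=U_1 D_\lambda V_1$ with $U_1,V_1\in U(N_n)$ and $\lambda\in\mathbb{C}^{N_n}$, and by \eqref{eqn:unitarily_invariant} one has $\|X\|_p=\|\lambda\|_{\ell_p}$. Combining the two parameterizations and exchanging the independent suprema,
\begin{align*}
\|T_n\|_{p\to p}
&= \sup_{U_1,V_1,\,\|\lambda\|_{\ell_p}=1}\|T_n(U_1 D_\lambda V_1)\|_p\\
&= \sup_{U_1,V_1,\,\|\lambda\|_{\ell_p}=1}\;\sup_{U_2,V_2}\bigl\|\diag\bigl(U_2\,T_n(U_1 D_\lambda V_1)\,V_2\bigr)\bigr\|_{\ell_p}\\
&= \sup_{U_1,U_2,V_1,V_2}\;\sup_{\|\lambda\|_{\ell_p}=1}\bigl\|T_n(U_1,U_2,V_1,V_2)(\lambda)\bigr\|_{\ell_p}\\
&= \sup_{U_1,U_2,V_1,V_2}\bigl\|T_n(U_1,U_2,V_1,V_2)\bigr\|_{\ell_p\to\ell_p},
\end{align*}
which is exactly \eqref{eqn:T_map_connection}.

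The main technical step is the pinching contraction $\|\diag(M)\|_{\ell_p}\le \|M\|_p$: it is not an operator-norm inequality and needs a separate argument, but the trace-duality derivation above treats all $1\le p\le\infty$ at once (with even shorter direct proofs available at the endpoints $p=\infty$, from $|M_{ii}|\le \|M\|_\infty$, and $p=1$, from diagonal sign matrices against the nuclear norm). The remaining ingredients — unitary invariance of Schatten norms, the SVD parameterization, and the interchange of independent suprema — are routine.
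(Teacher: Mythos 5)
Your proof is correct and follows essentially the same route as the paper's: parameterize the input by its singular value decomposition, use unitary invariance of the Schatten norms, and apply the diagonal pinching contraction $\|\diag(M)\|_{\ell_p}\le\|M\|_p$ (with the output SVD saturating the bound), differing only in presentation since you fold the two directions into one exchange of suprema and additionally supply a trace-duality proof of the pinching inequality, which the paper invokes without proof.
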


\begin{proof}
The proof proceeds in the following two steps.
	\smallskip
	
\noindent\textbf{Step 1}. In this step, we show that 
\begin{align}
	\label{eqn:le}
\sup_{U_1,U_2,V_1,V_2}\|T_n(U_1,U_2,V_1,V_2)\|_{\ell_p \to \ell_p}
\le \|T_n\|_{p\to p}.
\end{align}
	Fix unitaries $U_1,U_2,V_1,V_2\in M(N_n)$ and let $\lambda\in\mathbb{C}^{N_n}$ be arbitrary. Put $X:=U_1D_\lambda V_1$. Since unitaries preserve Schatten-$p$ norms,
	\[
	\|X\|_{p}=\|D_\lambda\|_{p}=\|\lambda\|_{\ell_p}.
	\]
Given that the diagonal extraction is a contraction: for any $A \in M(N_n)$,
\begin{align}
	\label{eqn:contraction}
	\|\operatorname{diag}(A)\|_{\ell_p} \le \|A\|_p,
\end{align}
we have, by the definition of
$T_n(U_1,U_2,V_1,V_2)$ in \eqref{eqn:T_fun}, that
\begin{align*}
	\|T_n(U_1,U_2,V_1,V_2)(\lambda)\|_{\ell_p}&=	\|\diag\big(U_2\,T_n(U_1 D_{\lambda} V_1)\,V_2\big)\|_{\ell_p}\\
&=\|\diag(U_2T_n(X)V_2)\|_{\ell_p}\\
&\le\|U_2T_n(X)V_2\|_{p}\\
&= \|T_n(X)\|_{p}.
\end{align*}
	Dividing by $\|\lambda\|_{\ell_p}=\|X\|_{p}$ and taking the supremum over non-zero $\lambda$ gives
	\[
	\|T_n(U_1,U_2,V_1,V_2)\|_{\ell_p \to \ell_p}
	\le \|T_n\|_{p\to p}.
	\]
	Since this holds for every choice of unitaries, we obtain \eqref{eqn:le}.
	\smallskip
	
\noindent\textbf{Step 2}. In this step, we show that 
\begin{align}
	\label{eqn:ge}	\sup_{U_1,U_2,V_1,V_2}\|T_n(U_1,U_2,V_1,V_2)\|_{\ell_p \to \ell_p}
\ge \|T_n\|_{p\to p}.
\end{align}
	Let $X\in M(N_n)$ be arbitrary and non-zero. Choose singular value decompositions
	\[
	X=U_1 D_X V_1^*\quad\text{and}\quad T_n(X)=W D_T Z^*,
	\]
	where $D_X$ and $D_T$ are diagonal matrices of singular values and $U_1,V_1,W,Z$ are unitaries. Set $\lambda:=\diag(D_X)$, the vector of singular values of $X$, and choose $U_2:=W^*$ and $V_2:=Z$. For this choice of unitaries
	\[
	\begin{aligned}
		T_n(U_1,U_2,V_1,V_2)(\lambda)
		= \diag\big(U_2\,T_n(U_1D_\lambda V_1)\,V_2\big)
		&= \diag\big(W^*\,T_n(X)\,Z\big)\\&= \diag(D_T).
	\end{aligned}
	\]
As unitaries preserve Schatten-$p$ norms,
	\[
	\|T_n(U_1,U_2,V_1,V_2)(\lambda)\|_{\ell_p}=\|\diag(D_T)\|_{\ell_p}
	= \|D_T\|_{p} = \|T_n(X)\|_{p}.
	\]
	Dividing by $\|\lambda\|_{\ell_p}=\|D_X\|_{p}=\|X\|_{p}$  and taking the supremum over non-zero $\lambda$ yields
	\[
	\|T_n(U_1,U_2,V_1,V_2)\|_{\ell_p \to \ell_p}
	\ge \frac{\|T_n(X)\|_{p}}{\|X\|_{p}}.
	\]
	Now, taking the supremum over all non-zero $X\in M(N_n)$ gives \eqref{eqn:ge}.
	\smallskip
	
	Combining the two inequalities \eqref{eqn:le} and \eqref{eqn:ge} proves \eqref{eqn:T_map_connection}.

\end{proof}

\begin{prop}
	\label{prop:height_equal}
Let $T_n: M(N_n) \rightarrow M(N_n)$ be a non-zero linear operator. By Definition \ref{def:height}, the height of $T_n$ satisfies
\begin{align}
	\label{hT=hT*}
	h(T_n)=h(T_n^*),
\end{align}
where $T_n^*$ is the adjoint of $T_n$ with respect to the Hilbert–Schmidt inner product. 
\end{prop}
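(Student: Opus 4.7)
The plan is to reduce the identity $h(T_n) = h(T_n^*)$ to a Schatten-norm duality statement for the induced operator norm, namely
\[
\|T_n^*\|_{q\to q} = \|T_n\|_{p\to p}, \qquad \tfrac{1}{p}+\tfrac{1}{q}=1,
\]
and then apply it for the three conjugate pairs $(p,q)=(1,\infty)$, $(\infty,1)$, and $(2,2)$. Since the height is built from $\|T_n\|_{1\to 1}$, $\|T_n\|_{\infty\to\infty}$, and $\|T_n\|_{2\to 2}$, these three substitutions swap the numerator factors while leaving the $2\to 2$ denominator invariant, yielding the claim.

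First I would set up the trace duality of Schatten norms: for $1/p+1/q=1$, the noncommutative Hölder inequality gives
\[
|\langle X,Y\rangle_{\HS}|=|\tr(X^*Y)|\le \|X\|_p\,\|Y\|_q,
\]
with equality attainable via the polar decomposition, so that
\[
\|Y\|_q=\sup_{\|X\|_p=1}|\langle X,Y\rangle_{\HS}|.
\]
I would take this duality as a standard input; otherwise it is a short argument using the singular value decomposition of $Y$.

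Next, for any non-zero $X\in M(N_n)$ with $\|X\|_p=1$, I would chain duality with the defining identity $\langle X,T_n^*(Y)\rangle_{\HS}=\langle T_n(X),Y\rangle_{\HS}$ to get
\begin{align*}
\|T_n^*\|_{q\to q}
&=\sup_{\|Y\|_q=1}\|T_n^*(Y)\|_q
=\sup_{\|Y\|_q=1}\sup_{\|X\|_p=1}\bigl|\langle X,T_n^*(Y)\rangle_{\HS}\bigr|\\
&=\sup_{\|X\|_p=1}\sup_{\|Y\|_q=1}\bigl|\langle T_n(X),Y\rangle_{\HS}\bigr|
=\sup_{\|X\|_p=1}\|T_n(X)\|_p
=\|T_n\|_{p\to p},
\end{align*}
where the inner supremum over $Y$ is evaluated by the duality identity applied to $T_n(X)$.

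Finally, I would specialize this duality: taking $(p,q)=(1,\infty)$ gives $\|T_n^*\|_{\infty\to\infty}=\|T_n\|_{1\to 1}$; taking $(p,q)=(\infty,1)$ gives $\|T_n^*\|_{1\to 1}=\|T_n\|_{\infty\to\infty}$; and taking $(p,q)=(2,2)$ gives $\|T_n^*\|_{2\to 2}=\|T_n\|_{2\to 2}$. Substituting into Definition~\ref{def:height} then yields
\[
h(T_n^*)
=\frac{\sqrt{\|T_n^*\|_{1\to 1}\,\|T_n^*\|_{\infty\to\infty}}}{\|T_n^*\|_{2\to 2}}
=\frac{\sqrt{\|T_n\|_{\infty\to\infty}\,\|T_n\|_{1\to 1}}}{\|T_n\|_{2\to 2}}
=h(T_n).
\]
The only conceptual step is the trace-duality identity for Schatten norms; once granted, the proof is a clean bookkeeping of indices, so there is no real obstacle beyond carefully justifying the interchange of suprema, which is automatic because both are over compact unit spheres and the expression is jointly continuous.
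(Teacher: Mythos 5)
Your proof is correct, but it takes a genuinely different route from the paper. The paper proves \eqref{hT=hT*} by passing through the auxiliary vector-level maps $T_n(U_1,U_2,V_1,V_2)$ of Proposition \ref{prop:T_map_connection}: it computes their adjoints, establishing $T_n(U_1,U_2,V_1,V_2)^* = T_n^*(U_2^*,U_1^*,V_2^*,V_1^*)$, and then transfers the classical $\ell_1$--$\ell_\infty$ duality on $\mathbb{C}^{N_n}$ together with the $\ell_2$ adjoint-norm equality (Lemma \ref{lem:adjoint_l2_norm}) through the supremum formula \eqref{eqn:T_map_connection}, yielding \eqref{eqn:T_T_2} and \eqref{eqn:T_T_1_infty}. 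You instead work directly at the matrix level, using Schatten trace duality
\[
\|Y\|_q=\sup_{\|X\|_p=1}\bigl|\langle X,Y\rangle_{\HS}\bigr|,\qquad \tfrac1p+\tfrac1q=1,
\]
and the defining property of the Hilbert--Schmidt adjoint to obtain $\|T_n^*\|_{q\to q}=\|T_n\|_{p\to p}$ in one stroke for every conjugate pair; specializing to $(1,\infty)$, $(\infty,1)$ and $(2,2)$ recovers exactly the three identities the paper derives separately, and Definition \ref{def:height} finishes the argument. Your route is shorter and more standard, and it makes the self-duality at $p=2$ a special case rather than requiring a separate lemma; its only external input is noncommutative H\"older duality with attainment, which, as you note, is a quick singular-value-decomposition argument. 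The paper's route needs only classical vector $\ell_p$ dualities, but it leans on the structural adjoint identity for the intermediate maps and has the side benefit that the resulting equation \eqref{eqn:T_T_2} is reused verbatim later (in the proof of Theorem \ref{thm:deltanorm}). One small remark: the interchange of suprema you worry about needs no compactness or joint continuity --- swapping two suprema of the same expression is always valid.
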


\begin{proof}
For $\lambda,\mu \in \mathbb{C}^{N_n}$, we have by the definition of $T_n(U_1,U_2,V_1,V_2)$ in equation \eqref{eqn:T_fun}, 
	\begin{align*}
		\Big\langle \mu,\;T_n(U_1,U_2,V_1,V_2)(\lambda)\Big\rangle
=&
		\Big\langle \mu,\, \diag\big(U_2T_n(U_1D_{\lambda}V_1)V_2\big)\Big\rangle\\
		=& \Big\langle D_{\mu},\, U_2T_n(U_1D_{\lambda}V_1)V_2\Big\rangle_{\HS}\\
		=& \Big\langle U_2^*D_{\mu} V_2^*,\, T_n(U_1D_{\lambda}V_1) \Big\rangle_{\HS}\\
		=& \Big\langle U_1^*T_n^*( U_2^*D_{\mu} V_2^*)V_1^*,\, D_{\lambda} \Big\rangle_{\HS}    \\
		=& \Big\langle \diag\big(U_1^*T_n^*( U_2^*D_{\mu} V_2^*)V_1^*\big),\, \lambda \Big\rangle.
	\end{align*}
By the definition of the adjoint of a linear map \(S:\mathbb{C}^{N_n}\to\mathbb{C}^{N_n}\),
$$\langle \mu, S(\lambda)\rangle=\langle S^*(\mu),\lambda\rangle$$ for all $\lambda$ and $\mu$.  
Comparing with the last displayed line shows
\[
T_n(U_1,U_2,V_1,V_2)^*(\mu)
= \diag\big(U_1^* T_n^*(U_2^* D_\mu V_2^*) V_1^*\big).
\]
But this right-hand side is exactly the map obtained by plugging \(T_n^*\) into the same template and replacing the unitary tuple \((U_1,U_2,V_1,V_2)\)
by \((U_2^*,U_1^*,V_2^*,V_1^*)\); i.e.
\[
T_n(U_1,U_2,V_1,V_2)^*
= T_n^*(U_2^*,U_1^*,V_2^*,V_1^*).
\]

Since for any linear map $S:\mathbb{C}^{N_n}\to \mathbb{C}^{N_n}$  it holds that $\|S\|_{\ell_2 \to \ell_2}=\|S^*\|_{\ell_2 \to \ell_2}$ (proved in Lemma \ref{lem:adjoint_l2_norm}), we have
\begin{align}
	\label{eqn:T_T_2}
	\|T_n\|_{2\to2}
	&=\sup_{U_1,U_2,V_1,V_2}\|T_n(U_1,U_2,V_1,V_2)\|_{\ell_2 \to \ell_2}\nonumber\\
	&=\sup_{U_1,U_2,V_1,V_2}\|T_n(U_1,U_2,V_1,V_2)^*\|_{\ell_2 \to \ell_2}\nonumber\\
	&=\sup_{U_1,U_2,V_1,V_2}\big\|T_n^*(U_2^*,U_1^*,V_2^*,V_1^*)\big\|_{\ell_2 \to \ell_2}\nonumber\\
	&=\sup_{U_1',U_2',V_1',V_2'}\big\|T_n^*(U_1',U_2',V_1',V_2')\big\|_{\ell_2 \to \ell_2}\nonumber\\
	&=\|T_n^*\|_{2\to2}.
\end{align}
Recall the duality relations for $x,y\in \mathbb{C}^{N_n}$, 
$$\|x\|_{\ell_1}=\sup_{\|y\|_{\ell_\infty}=1}|\langle y,x\rangle|\quad \text{and}
\quad\|z\|_{\ell_\infty}=\sup_{\|x\|_{\ell_1}=1}|\langle z,x\rangle|.$$ 
Then
\begin{align*}
	\|T_n(U_1,U_2,V_1,V_2)\|_{\ell_1 \to \ell_1}
	&=\sup_{\|x\|_{\ell_1}=1}\|T_n(U_1,U_2,V_1,V_2)x\|_{\ell_1}\\
	&=\sup_{\|x\|_{\ell_1}=1}\sup_{\|y\|_{\ell_\infty}=1} \big|\langle y,\,T_n(U_1,U_2,V_1,V_2)x\rangle\big|\\
	&=\sup_{\|y\|_{\ell_\infty}=1}\sup_{\|x\|_{\ell_1}=1} \big|\langle T_n(U_1,U_2,V_1,V_2)^*y,\,x\rangle\big|\\
	&=\sup_{\|y\|_{\ell_\infty}=1}\|T_n(U_1,U_2,V_1,V_2)^*y\|_{\ell_\infty}\\
	&=\|T_n(U_1,U_2,V_1,V_2)^*\|_{\ell_\infty \to \ell_\infty}.
\end{align*}
By the same reasoning (interchanging the roles of $1$ and $\infty$), one obtains
\[
\|T_n(U_1,U_2,V_1,V_2)\|_{\ell_\infty \to \ell_\infty}
=\|T_n(U_1,U_2,V_1,V_2)^*\|_{\ell_1 \to \ell_1}.
\]
Taking the supremum over all unitary tuples $U_1,U_2,V_1,V_2\in M(N_n)$ yields
\begin{align}
	\label{eqn:T_T_1_infty}
\|T_n\|_{1\to1}=\|T_n^*\|_{\infty\to\infty}\quad \text{and}
\quad
\|T_n\|_{\infty\to\infty}=\|T_n^*\|_{1\to1}.
\end{align}
Combining \eqref{eqn:T_T_2} and \eqref{eqn:T_T_1_infty}, the proof is complete by Definition~\ref{def:height}.
\end{proof}

In the classical setting, \cite{lev2015discrete} defined the logarithmic diameter for a non-zero vector $z=(z_1,\ldots,z_n)\in \mathbb{C}^{n}$ by
\begin{align}
	\label{eqn:LDV}
	\ell(z):=\frac{\max\{|z_i|:i\in \{1,\ldots,n\}\}}{\min\{|z_i|:i\in \{1,\ldots,n\},z_i\neq 0\}}.
\end{align}
In the quantum setting, considering complex matrices, we define the logarithmic diameter as follows.
\begin{dfn}
	\label{def:LDM}
	For $A \in M(N_n)$, let
	\begin{align*}
		0< \lambda_1 \leq \lambda_2 \leq \cdots \leq \lambda_k
	\end{align*}
	be all the non-zero (strictly positive) eigenvalues of $A^*A$. We define the logarithmic diameter of $A$ by
	\begin{align*}
		\ell(A):=\sqrt{\lambda_k/\lambda_1}.
	\end{align*}
\end{dfn}

The following lemma will serve as the cornerstone for the subsequent proofs.
\begin{lemma}
	\label{lemma_T_l}
		Let $T_n:M(N_n)\to M(N_n)$ be a linear map with $h(T_n)\leq K$ for $K\geq 1$ a real number, then there exists a nonzero matrix $A\in M(N_n)$ such that
		$$\|T_n(A)\|_2>\frac{1}{4}\|T_n\|_{2 \rightarrow 2} \|A\|_2,$$
		and $A$ may be chosen of the form $A=U_1D_zV_1$, where $U_1,V_1$ are unitary
		and $D_z=\operatorname{diag}(z)$ satisfies
		\[
		\ell(A)=\ell(z)<32K^2+1.
		\]

\end{lemma}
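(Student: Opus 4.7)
The plan is to reduce this assertion to its classical counterpart on complex matrices acting on $\mathbb{C}^{N_n}$, for which I invoke the structural result of \cite{lev2015discrete}. The bridge is Proposition~\ref{prop:T_map_connection}, which identifies $\|T_n\|_{p\to p}$ with the supremum of $\|T_n(U_1,U_2,V_1,V_2)\|_{\ell_p\to \ell_p}$ over unitary tuples. I will use this correspondence simultaneously at $p=1,2,\infty$ to produce a \emph{single} classical matrix $M$ on $\mathbb{C}^{N_n}$ whose classical height (Definition~\ref{def:classical_height}) is at most $K$ and whose $\ell_2\to\ell_2$ norm equals $\|T_n\|_{2\to 2}$. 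Lev's lemma applied to $M$ then returns a vector $z$ with $\ell(z)<32K^2+1$ and $\|Mz\|_{\ell_2}>\tfrac14\|M\|_{\ell_2\to\ell_2}\|z\|_{\ell_2}$, and the sought matrix is $A=U_1 D_z V_1$.

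Since $M(N_n)$ is finite-dimensional, the supremum defining $\|T_n\|_{2\to 2}$ is attained at some non-zero $X\in M(N_n)$. Following Step 2 of the proof of Proposition~\ref{prop:T_map_connection}, extract unitaries $U_1,U_2,V_1,V_2$ from the singular value decompositions of $X$ and $T_n(X)$, and set $M:=T_n(U_1,U_2,V_1,V_2)$. That step guarantees $\|M\|_{\ell_2\to\ell_2}\ge \|T_n\|_{2\to 2}$, while Step 1 of the same proposition gives the uniform upper bound $\|M\|_{\ell_p\to\ell_p}\le \|T_n\|_{p\to p}$ for every $p\in[1,\infty]$ and every choice of unitary tuple. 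Combining these at $p=1,\infty$ with the saturated equality at $p=2$ yields
\[
h(M) \;=\; \frac{\sqrt{\|M\|_{\ell_1\to\ell_1}\,\|M\|_{\ell_\infty\to\ell_\infty}}}{\|M\|_{\ell_2\to\ell_2}} \;\le\; \frac{\sqrt{\|T_n\|_{1\to 1}\,\|T_n\|_{\infty\to\infty}}}{\|T_n\|_{2\to 2}} \;=\; h(T_n) \;\le\; K.
\]

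Apply the classical lemma of \cite{lev2015discrete} to $M$: there is a non-zero $z\in\mathbb{C}^{N_n}$ with $\ell(z)<32K^2+1$ and $\|Mz\|_{\ell_2}>\tfrac14\|M\|_{\ell_2\to\ell_2}\|z\|_{\ell_2}$. Set $A:=U_1 D_z V_1$. Since $U_1,V_1$ are unitary and $A^*A=V_1^*\diag(|z_1|^2,\ldots,|z_{N_n}|^2)V_1$, the singular values of $A$ are $|z_1|,\ldots,|z_{N_n}|$, whence $\|A\|_2=\|z\|_{\ell_2}$ and Definition~\ref{def:LDM} gives $\ell(A)=\ell(z)<32K^2+1$. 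Unitary invariance of $\|\cdot\|_2$ combined with the contraction property~\eqref{eqn:contraction} of the diagonal extraction shows
\[
\|T_n(A)\|_2 \;=\; \|U_2 T_n(U_1 D_z V_1) V_2\|_2 \;\ge\; \|\diag(U_2 T_n(U_1 D_z V_1) V_2)\|_{\ell_2} \;=\; \|Mz\|_{\ell_2},
\]
and since $\|M\|_{\ell_2\to\ell_2}=\|T_n\|_{2\to 2}$, this delivers $\|T_n(A)\|_2>\tfrac14\|T_n\|_{2\to 2}\|A\|_2$.

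The main obstacle is compatibility across the three values of $p$: the unitaries extracted from the SVDs of $X$ and $T_n(X)$ are tailored to saturate the Schatten-$2$ norm, yet one still needs simultaneous control of $\|M\|_{\ell_1\to\ell_1}$ and $\|M\|_{\ell_\infty\to\ell_\infty}$ to bound the classical height of $M$. This obstacle dissolves once one observes that Step 1 of Proposition~\ref{prop:T_map_connection} provides the upper bounds uniformly in the unitary tuple, so no tradeoff arises. A secondary verification is that Lev's vector-level logarithmic diameter~\eqref{eqn:LDV} and the matrix-level diameter of Definition~\ref{def:LDM} coincide for $A=U_1 D_z V_1$, which follows directly from the singular value computation above.
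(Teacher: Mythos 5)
Your proposal is correct and follows essentially the same route as the paper: transfer $T_n$ to a classical matrix $M=T_n(U_1,U_2,V_1,V_2)$ via Proposition~\ref{prop:T_map_connection}, bound its classical height, invoke Lemma~1 of \cite{lev2015discrete}, and pull back through $A=U_1D_zV_1$ using unitary invariance and \eqref{eqn:contraction}; the only difference is that you build the unitary tuple from an exact Schatten-$2$ maximizer (legitimate in finite dimensions), whereas the paper uses a factor-$2$ near-maximizer of the supremum over tuples and therefore works with $h(M)\le 2K$. One small caveat: the constants you attribute to Lev's lemma ($1/4$ and $32K^2+1$) are really those of the quantum statement being proved—Lev's Lemma~1 gives $1/2$ and $8K'^2+1$ for height at most $K'$—but since your construction yields $h(M)\le K$, the actual classical lemma delivers strictly stronger bounds, so your conclusion follows a fortiori (indeed with a better constant than the lemma demands).
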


\begin{proof}

Note that by Proposition \ref{prop:T_map_connection}, 
$$		\|T_n\|_{p\to p}
=\sup_{U_1,U_2,V_1,V_2}\,
\big\|T_n(U_1,U_2,V_1,V_2)\big\|_{\ell_p \to \ell_p}.$$
It guarantees that we can find unitaries $U_1,U_2,V_1$ and $V_2$ in $M(N_n)$ such that
	\begin{align}
		\label{eqn:ineq_Tn_22}
		\|T_n\|_{2\rightarrow 2}/2 \leq \|T_n(U_1,U_2,V_1,V_2)\|_{\ell_2 \to \ell_2},
	\end{align}
	which yields that
	\begin{align*}
	\frac{1}{\|T_n\|_{2\rightarrow 2}}\geq \frac{1}{2\|T_n(U_1,U_2,V_1,V_2)\|_{\ell_2 \to \ell_2}}.
	\end{align*}
Then by the definition of the height of a matrix given in Definition \ref{def:classical_height}, together with \eqref{eqn:T_map_connection}, we have 
	\begin{align*}
		h(T_n(U_1,U_2,V_1,V_2))^2 &= \frac{ \|T_n(U_1,U_2,V_1,V_2)\|_{\ell_1 \to \ell_1} \|T_n(U_1,U_2,V_1,V_2)\|_{\ell_\infty \to \ell_\infty} }{ \|T_n(U_1,U_2,V_1,V_2)\|_{\ell_2 \to \ell_2}^2 }\\
		&\leq 4\frac{\|T_n(U_1,U_2,V_1,V_2)\|_{\ell_1 \to \ell_1} \|T_n(U_1,U_2,V_1,V_2)\|_{\ell_\infty \to \ell_\infty} }{ \|T_n\|_{2\rightarrow 2}^2 }\\
		&\leq 4\frac{ \|T_n\|_{1\rightarrow 1} \|T_n\|_{\infty\rightarrow \infty} }{ \|T_n\|_{2\rightarrow 2}^2 }.
	\end{align*}
	Therefore, by the definition of the height of a linear operator given in Definition \ref{inequalityofoperatornorms}, we have
	$$ h(T_n(U_1,U_2,V_1,V_2)) \leq 2 h(T_n)\leq 2K.$$ 
	By Lemma 1 of \cite{lev2015discrete}, there exists $z \in \mathbb{C}^{N_n}$ such that
	\begin{align}
		\label{eqn:Tn_ineq}
		\|T_n(U_1,U_2,V_1,V_2)z\|_{l_2}> \frac{1}{2}\|T_n(U_1,U_2,V_1,V_2)\|_{\ell_2 \to \ell_2} \cdot\|z\|_{l_2},
	\end{align}
and then the logarithmic diameter for a non-zero vector defined in \eqref{eqn:LDV}
$$\ell(z)< 8(2K)^2+1.$$

For unitaries $U_1$ and $V_1$ in $M(N_n)$, taking the matrix 
\begin{align}
\label{eqn:A_form}
A=U_1D_zV_1,
\end{align}
with $D_z$ being the diagonal matrix $\diag(z)$, 
the logarithmic diameter for a non-zero matrix defined in Definition \ref{def:LDM} satisfies
$$\ell(A)=\ell(z)<32 K^2+1.$$
Note that by \eqref{eqn:contraction},
	\begin{align}
		\label{eqn:HS_Tn}
		\langle T_n(A),T_n(A) \rangle_{\HS}&= \Big\langle U_2T_n(A)V_2,\, U_2T_n(A)V_2 \Big\rangle_{\HS} \nonumber\\
		&\geq \Big\langle \diag(U_2T_n(A)V_2),\, \diag(U_2T_n(A)V_2) \Big\rangle .
	 \end{align}
It follows by the definition of $T_n(U_1,U_2,V_1,V_2)$ given in \eqref{eqn:T_fun} that 
$$\diag(U_2T_n(A)V_2)=T_n(U_1,U_2,V_1,V_2)(z),$$
using the specific form of $A$, and then by \eqref{eqn:HS_Tn}
	\begin{align*}
	\langle T_n(A),T_n(A) \rangle_{\HS}\geq \|T_n(U_1,U_2,V_1,V_2)(z)\|_{l_2}^2.
\end{align*}
Thus, by equation  \eqref{eqn:Tn_ineq},
	\begin{align*}
	\langle T_n(A),T_n(A) \rangle_{\HS}> \frac{1}{4} \|T_n(U_1,U_2,V_1,V_2)\|_{\ell_2 \to \ell_2}^2 \|z\|_{l_2}^2.
  \end{align*}
At last, using the unitarily invariant property \eqref{eqn:unitarily_invariant} and equation \eqref{eqn:ineq_Tn_22}, we have by the specific form of $A$ that
	\begin{align*}
	\langle T_n(A),T_n(A) \rangle_{\HS}> \frac{1}{16} \|T_n\|_{2 \rightarrow 2}^2 \|A\|_2^2,
\end{align*}
i.e., $\|T_n(A)\|>\frac{1}{4}\|T_n\|_{2 \rightarrow 2} \|A\|_2$. 
The proof is complete.
\end{proof}

\begin{lemma}
	\label{lemma:X_P}
If $X \in M(N_n)$ is a matrix with its Schatten height $\widetilde{h}(X)\leq K$ for $K\geq 1$ a real number, then there exists a non-zero orthogonal projection $P\in M(N_n)$ such that
	\begin{align*}
		|\langle X, P\rangle_{\HS}| \geq \frac{1}{2\sqrt{4\log(2K)+2}} \|X\|_2\|P\|_2.
	\end{align*}
\end{lemma}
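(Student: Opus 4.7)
My plan is to reduce this noncommutative statement to the classical scalar analogue in \cite{lev2015discrete} by first Hermitizing $X$ and then diagonalizing, so that the projection-selection problem becomes a subset-selection problem on the eigenvalue vector. Concretely, I would write $X = X_1 + i X_2$ with $X_1 := (X + X^*)/2$ and $X_2 := (X - X^*)/(2i)$ both Hermitian. The triangle inequality for Schatten norms gives $\|X_j\|_p \leq \|X\|_p$ for $p \in \{1,\infty\}$, while $\|X\|_2^2 = \|X_1\|_2^2 + \|X_2\|_2^2$ forces $\max(\|X_1\|_2, \|X_2\|_2) \geq \|X\|_2/\sqrt{2}$. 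Taking (WLOG) $X_1$ to be the larger piece yields $\widetilde{h}(X_1) \leq \sqrt{2}\,\widetilde{h}(X) \leq \sqrt{2}K$. Since any orthogonal projection $P$ is Hermitian, $\tr(X_j P)$ is real, and hence $|\langle X, P\rangle_{\HS}|^2 = |\tr(X_1 P)|^2 + |\tr(X_2 P)|^2 \geq |\langle X_1, P\rangle_{\HS}|^2$. Thus it suffices to find $P$ with $|\langle X_1, P\rangle_{\HS}|$ large, at the cost of a factor $\sqrt{2}$ in both the effective height and in the norm on the right-hand side.

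Next, I would spectrally decompose $X_1 = U \Lambda U^*$ with $\Lambda = \diag(\lambda_1,\ldots,\lambda_{N_n})$ real, and observe that the Schatten $p$-norms of $X_1$ coincide with the $\ell_p$-norms of $\lambda$, so the vector $\lambda$ has classical height at most $\sqrt{2}K$. Appealing to the scalar analogue of the present lemma from \cite{lev2015discrete} (obtained via a truncation-and-dyadic-pigeonhole argument in which one first passes to a sub-vector with logarithmic diameter of order $K^2$, then selects indices in a single dyadic scale sharing a common sign) yields a non-empty subset $S \subseteq \{1,\ldots,N_n\}$ such that $\bigl|\sum_{i \in S}\lambda_i\bigr| \geq \|\lambda\|_{\ell_2}\sqrt{|S|}/\bigl(c\sqrt{\log(2K)+1}\bigr)$ for a universal constant $c$. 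I would then set $P := U\,\diag(\mathbf{1}_S)\,U^*$, which is a non-zero orthogonal projection in $M(N_n)$ with $\|P\|_2 = \sqrt{|S|}$. A direct trace computation gives $\langle X_1, P\rangle_{\HS} = \tr(\Lambda\,\diag(\mathbf{1}_S)) = \sum_{i \in S}\lambda_i$, and combining this with the Hermitization estimate completes the proof once the constants are aggregated.

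The main obstacle is precisely this constant bookkeeping: matching the denominator $2\sqrt{4\log(2K)+2}$ requires combining the $\sqrt{2}$ factor from Hermitization, the sign-pigeonhole factor of $2$ from reducing to same-sign eigenvalues, and the dyadic-scale factor from the scalar lemma, whose exact form depends on how the truncation threshold is set. A secondary subtlety is ensuring that the invoked scalar result applies to signed (hence possibly cancelling) eigenvalues and outputs a subset rather than merely a signed combination; this is handled by the same sign-pigeonhole, but the final arithmetic must be done with care to land on the exact constant in the statement.
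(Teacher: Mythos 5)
Your proposal follows essentially the same route as the paper: Hermitize $X$ via the Toeplitz/Cartesian decomposition, pass to the eigenvalue vector of the dominant Hermitian part (whose classical height is at most $\sqrt{2}K$), invoke Lemma~4 of \cite{lev2015discrete} to select a binary vector, conjugate it back by the diagonalizing unitary to get the projection $P$, and finish with $|\langle X,P\rangle_{\HS}|^2=\tr(X_1P)^2+\tr(X_2P)^2\ge\tr(X_1P)^2$. The constant bookkeeping you flag as the main obstacle resolves exactly as in the paper once you use Lev's lemma in its precise form (denominator $2\sqrt{\log(2\widetilde{K}^2)+1}$): with $\widetilde{K}=\sqrt{2}K$ and the extra factor $\|X_1\|_2\ge\|X\|_2/\sqrt{2}$ one gets $2\sqrt{2}\sqrt{\log(4K^2)+1}=2\sqrt{4\log(2K)+2}$, which is the stated constant.
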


\begin{proof}
	We proceed the proof in two steps. In the first step, we establish the desired result for a self-adjoint matrix. In the second step, we extend that to a general matrix.
	\medskip
	
\noindent \textbf{Step 1.} 
Suppose $A \in M(N_n)$ is a self-adjoint matrix with its Schatten height $\widetilde{h}(A)\leq \widetilde{K}$ for $\widetilde{K}\geq 1$ a real number, where $\widetilde{h}(A)$ is given in Definition \ref{def:Schatten_height}. 
	Since $A$ is self-adjoint, there exists a unitary $U \in M(N_n)$ such that
	\begin{align*}
		UAU^*=\diag(\lambda_1, \ldots, \lambda_{N_n}),
	\end{align*}
where all eigenvalues are real valued.
By the definition of the Schatten height and the unitarily invariant property \eqref{eqn:unitarily_invariant}, we have
\begin{align}
	\label{eqn:h(A)^2}
\widetilde{h}(A)^2=\frac{\|A\|_{1}\|A\|_{\infty}}{\|A\|_{2}^2}=\frac{\|UAU^*\|_{1}\|UAU^*\|_{\infty}}{\|UAU^*\|_{2}^2}=\widetilde{h}(UAU^*)^2.
\end{align}

Note that a matrix $B \in \mathbb{C}^{m \times 1}$ can be identified with a
column vector $b \in \mathbb{C}^m$. For any $p \in [1,\infty]$, the induced
$\ell_p \to \ell_p$ operator norm of $B$ satisfies
\[
\|B\|_{\ell_p \to \ell_p}
=
\sup_{x \in \mathbb{C},\, x \neq 0}
\frac{\|Bx\|_{\ell_p}}{\|x\|_{\ell_p}}
=
\sup_{x \neq 0}
\frac{|x|\,\|b\|_{\ell_p}}{|x|}
=
\|b\|_{\ell_p}.
\]
Consequently, by Definition~\ref{def:classical_height}, the height of a vector
$(\lambda_1, \ldots, \lambda_{N_n})^{T}$, where the superscript
$T$ denotes the transpose, can be written as
	\begin{align*}
h((\lambda_1,\ldots, \lambda_{N_n})^T)^2=&\frac{\|(\lambda_1,\ldots, \lambda_{N_n})^T\|_{l_1}\|(\lambda_1,\ldots, \lambda_{N_n})^T\|_{l_\infty}}{\|(\lambda_1,\ldots, \lambda_{N_n})^T\|_{l_2}^2}\\
=&\frac{\|\diag(\lambda_1,\ldots, \lambda_{N_n})\|_{1}\|\diag(\lambda_1,\ldots, \lambda_{N_n})\|_{\infty}}{\|\diag(\lambda_1,\ldots, \lambda_{N_n})\|_2^2}\\
	=&\widetilde{h}(UAU^*)^2.
	\end{align*}
It follows by \eqref{eqn:h(A)^2} that 
$$h((\lambda_1,\ldots, \lambda_{N_n})^T)=\widetilde{h}(A)\leq \widetilde{K}.$$
Then by Lemma 4 of \cite{lev2015discrete}, there exists an $N_n$-dimensional binary vector $(\xi_1,\ldots,\xi_{N_n})^T \in \{0,1\}^{N_n}$ such that
	\begin{align}
		\label{eqn:lemma3}
		|\cos{((\lambda_1,\ldots, \lambda_{N_n})^T,(\xi_1,\ldots,\xi_{N_n})^T)}|\geq \frac{1}{2\sqrt{\log(2\widetilde{K}^2) +1}},
	\end{align}
where, for non-zero vectors $u,v \in \mathbb{C}^{N_n}$, $$\cos(u,v)=\frac{\langle u,v\rangle}{\|u\|_{l_2}\|v\|_{l_2}}.$$

Note that 
\begin{align*}
	&\|A\|_2=\|(\lambda_1,\ldots, \lambda_{N_n})^T\|_{l_2},\\
&\|U^*\diag(\xi_1,\ldots,\xi_{N_n})U\|_2=\|(\xi_1,\ldots,\xi_{N_n})^T\|_{l_2},
\end{align*}
and given that the trace is invariant under circular shifts
\begin{align*}
\Big\langle A,\, U^*\diag(\xi_1,\ldots,\xi_{N_n})U\Big\rangle_{\HS}=&\tr\Big(A^*U^*\diag(\xi_1,\ldots,\xi_{N_n})U\Big)\\
=&\tr\Big(UA^*U^*\diag(\xi_1,\ldots,\xi_{N_n})\Big)\\
=&\tr\Big(\diag(\lambda_1, \ldots, \lambda_{N_n})\,\diag(\xi_1,\ldots,\xi_{N_n})\Big)\\
=&\Big\langle (\lambda_1, \ldots, \lambda_{N_n})^T,\, (\xi_1,\ldots,\xi_{N_n})^T \Big\rangle.
\end{align*}
Set 
	\begin{align*}
P=U^*\diag(\xi_1,\ldots,\xi_{N_n})U,
	\end{align*}
 which is an orthogonal projection in $M(N_n)$ as $\xi \in \{0,1\}^{N_n}$. Then by equation \eqref{eqn:lemma3}, we have that
	\begin{align}
		\label{eqn:step1}
		|\langle A, P\rangle_{\HS}| &\geq \frac{1}{2\sqrt{\log(2\widetilde{K}^2) +1}}\|(\lambda_1,\ldots, \lambda_{N_n})^T\|_{l_2}\|(\xi_1,\ldots,\xi_{N_n})^T\|_{l_2}\nonumber\\
		& =\frac{1}{2\sqrt{\log(2\widetilde{K}^2) +1}}\|A\|_2\|P\|_2.
	\end{align}

	\smallskip

\noindent \textbf{Step 2.} 
According to the Toeplitz decomposition, each $X\in M(N_n)$ can be written uniquely as 
$$X=A+iB,$$
in which $A,B \in M(N_n)$ are self-adjoint matrices; see \cite{horn2012matrix}. 
Note that all Schatten-$p$ norms depend only on the singular values of the matrix.
Specifically, for 
$p\in [1,\infty )$, the Schatten-$p$ norm of $X\in M(N_n)$ can be written as
\begin{align}
	\label{eqn:p_norm_spectral}
\|X\|_p=\Big[\tr\big((X^*X)^{\frac{p}{2}}\big)\Big]^{1/p}=\Big(\sum_{i} \sigma_i(X)^p\Big)^{1/p};
\end{align}
while for $p=\infty$, it can be written as
\begin{align}
	\label{eqn:infty_norm_spectral}
	\|X\|_{\infty}=\sup_{\eta\in \mathbb{C}^{N_n},\|\eta\|_{l_2}=1} \|X\eta\|_{l_2}=\max_{i} \sigma_i(X).
\end{align}
The nonzero eigenvalues of $X^*X$ and $XX^*$ coincide, so the singular values $\sigma_i(X)=\sqrt{\lambda_i(X^*X)}$ are the same as those of $X^*$. Therefore, for any $p\in[1,\infty)$,
\[
\|X\|_p=\Big(\sum_i \sigma_i(X)^p\Big)^{1/p}
=\Big(\sum_i \sigma_i(X^*)^p\Big)^{1/p}
=\|X^*\|_p;
\]
for $p=\infty$,
\[
\|X\|_\infty=\max_{i} \sigma_i(X)
=\max_{i} \sigma_i(X^*)
=\|X^*\|_\infty.
\]
It follows that
	\begin{align}
		\label{eqn:A_inequlities}
		\|A\|_{1}=\left\|\frac{X+X^*}{2}\right\|_{1}&\leq \frac{\|X\|_{1}+\|X^*\|_{1}}{2}=\|X\|_{1}\\ \text{and}\quad
		\|A\|_{\infty}=\left\|\frac{X+X^*}{2}\right\|_{\infty}&\leq \frac{\|X\|_{\infty}+\|X^*\|_{\infty}}{2}=\|X\|_{\infty}.    \label{eqn:A_inequlities_infty}
	\end{align}
	Additionally, for the Hilbert-Schmidt norm, we have
	\begin{align*}
		\|X\|_2^2=\tr((A+iB)(A-iB))=\tr(A^2)+\tr(B^2)=\|A\|_2^2+\|B\|_2^2.
	\end{align*}
	Replacing $X$ by $iX$ if necessarily, we may assume $\|A\|_2^2\geq \|B\|_2^2$,
	which gives $$\|A\|_2\geq \frac{1}{\sqrt{2}}\|X\|_2.$$
	Then, by the definition of Schatten height given in Definition \ref{def:Schatten_height}, equations \eqref{eqn:A_inequlities} and \eqref{eqn:A_inequlities_infty}, and the condition  that $\widetilde{h}(X)\leq K$, 
	\begin{align*}
	\widetilde{h}(A) =\frac{\sqrt{\|A\|_{1}\|A\|_{\infty}}}{\|A\|_2} \leq \frac{\sqrt{\|X\|_{1}\|X\|_{\infty}}}{\|X\|_2/\sqrt{2}}=\sqrt{2}	\widetilde{h}(X)\leq \sqrt{2}K.
	\end{align*}
	By the result of Step 1, specifically \eqref{eqn:step1} setting $\widetilde{K}=\sqrt{2}K$, there exists an orthogonal projection $P\in M(N_n)$ such that
	\begin{align*}
		|\langle A, P \rangle_{\HS}| \geq \frac{1}{2\sqrt{\log (2(\sqrt{2}K)^2)+1}} \|A\|_2\|P\|_2
		& \geq \frac{1}{2\sqrt{\log (4K^2)+1}} \frac{\|X\|_2}{\sqrt{2}} \|P\|_2 \\
		& = \frac{1}{2\sqrt{4\log(2K)+2}} \|X\|_2\|P\|_2.
	\end{align*}
	Combining this with
	\begin{align*}
		|\langle X,P\rangle_{\HS}|=|\langle A,P\rangle_{\HS}+i\langle B,P \rangle_{\HS}| = \sqrt{\langle A, P \rangle_{\HS}^2 + \langle B, P \rangle_{\HS}^2}\geq |\langle A,P\rangle_{\HS}|
	\end{align*}
	yields the desired estimate.
\end{proof}

\begin{thm}
	\label{thm:deltanorm}
For any linear map $T_n:M(N_n)\rightarrow M(N_n)$, there exists a non-zero orthogonal projection $P\in M(N_n)$ such that
	\begin{align*}
		\frac{\|T_n\|_{2\rightarrow 2}}{8\sqrt{4\log (48h^2(T_n))+2}}< \frac{\|T_nP\|_2}{\|P\|_2} \leq \|T_n\|_{2\rightarrow 2}.
	\end{align*}
\end{thm}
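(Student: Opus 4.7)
The upper bound $\frac{\|T_n P\|_2}{\|P\|_2} \le \|T_n\|_{2\to 2}$ is immediate from the definition of the Schatten-induced operator norm in \eqref{eqn:p_to_p}. For the lower bound, I plan to combine Lemma~\ref{lemma_T_l} applied to $T_n^*$ with Lemma~\ref{lemma:X_P}, using the adjoint identity $\langle T_n P, A\rangle_{\HS} = \langle P, T_n^* A\rangle_{\HS}$ together with Cauchy--Schwarz to transfer a lower bound on a Hilbert--Schmidt inner product back into a lower bound on $\|T_n P\|_2$.

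First, invoke Lemma~\ref{lemma_T_l} on $T_n^*$---using Proposition~\ref{prop:height_equal} to equate $h(T_n^*)$ with $h(T_n)$, and equation \eqref{eqn:T_T_2} to equate $\|T_n^*\|_{2\to 2}$ with $\|T_n\|_{2\to 2}$---to obtain a nonzero matrix $A = U_1 D_z V_1 \in M(N_n)$ satisfying
\[
\|T_n^* A\|_2 > \tfrac{1}{4}\|T_n\|_{2\to 2}\|A\|_2
\quad\text{and}\quad
\ell(A) = \ell(z) < 32\,h^2(T_n) + 1.
\]
Set $W := T_n^* A$. The key preparatory step is to bound $\widetilde{h}(W)$. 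Using the elementary inequality $\|z\|_{\ell_2}^2 \ge \bigl(\min_{z_i\neq 0}|z_i|\bigr)\,\|z\|_{\ell_1}$ together with the unitary invariance of Schatten norms yields $\widetilde{h}(A)^2 \le \ell(A)$. Combining this with the submultiplicativity $\|W\|_p \le \|T_n^*\|_{p\to p}\,\|A\|_p$ for $p\in\{1,\infty\}$ and the adjoint identities \eqref{eqn:T_T_1_infty} gives
\[
\widetilde{h}(W)^2 \;\le\; 16\,h^2(T_n)\,\widetilde{h}(A)^2 \;\le\; 16\,h^2(T_n)\bigl(32\,h^2(T_n)+1\bigr) \;\le\; 528\,h^4(T_n),
\]
so $\widetilde{h}(W) \le 24\,h^2(T_n)$ (using $h(T_n) \ge 1$ from Lemma~\ref{inequalityofoperatornorms}).

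Next, apply Lemma~\ref{lemma:X_P} with $X = W$ and $K = 24\,h^2(T_n)$ to obtain a non-zero orthogonal projection $P \in M(N_n)$ satisfying
\[
|\langle W, P\rangle_{\HS}| \;\ge\; \frac{1}{2\sqrt{4\log(48\,h^2(T_n))+2}}\,\|W\|_2\,\|P\|_2.
\]
The theorem then follows by chaining: $\|T_n P\|_2\,\|A\|_2 \ge |\langle T_n P, A\rangle_{\HS}| = |\langle P, W\rangle_{\HS}|$, substituting the lower bound $\|W\|_2 > \tfrac14\|T_n\|_{2\to 2}\|A\|_2$, and cancelling $\|A\|_2$; this produces the overall constant $\tfrac14\cdot\tfrac12 = \tfrac18$. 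The main obstacle---and the reason for applying Lemma~\ref{lemma_T_l} to $T_n^*$ rather than $T_n$, and for using the bound $\widetilde{h}(A)^2 \le \ell(A)$ in place of the weaker $\widetilde{h}(A)^2 \le \ell(A)^2$---is ensuring $\widetilde{h}(W) = O(h^2(T_n))$; this is precisely what produces the argument $48\,h^2(T_n)$ of the logarithm matching the stated constant.
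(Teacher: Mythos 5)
Your proposal is correct and follows essentially the same route as the paper's proof: apply Lemma~\ref{lemma_T_l} to $T_n^*$ (via $h(T_n^*)=h(T_n)$ and $\|T_n^*\|_{2\to2}=\|T_n\|_{2\to2}$), bound $\widetilde{h}(T_n^*(A))$ by $24\,h^2(T_n)$ using $\widetilde{h}(A)^2\le \ell(A)$ and the $16\,h^2$ factor, then apply Lemma~\ref{lemma:X_P} and transfer back with the adjoint identity and Cauchy--Schwarz to get the $\tfrac18$ constant and the argument $48\,h^2(T_n)$ in the logarithm. The only differences are minor bookkeeping in how the $24\,h^2(T_n)$ bound is assembled, so nothing further is needed.
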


\begin{proof}
	The second inequality clearly holds by the definition of $\|\cdot\|_{2\rightarrow 2}$, since $P$ being  a non-zero projection is a special case of a complex matrix. Now, we prove the first inequality.
	
Suppose $h(T_n)= K$ for $K\geq 1$ a real number. By equation \eqref{hT=hT*} that $h(T_n)=h(T_n^*)$, we have $h(T_n^*)= K$.
By Lemma \ref{lemma_T_l}, there exists a non-zero matrix $A \in M(N_n)$ such that
\begin{align}
	\label{eqn:eqn_T_l}
\|T_n^*(A)\|_2> \frac{1}{4}\|T_n^*\|_{2 \rightarrow 2} \|A\|_2\quad\text{and}\quad \ell(A)<32K^2+1.
\end{align}
	Consider the spectral decomposition of the positive semidefinite operator $(A^*A)^{\frac{1}{2}}$ as
	$$(A^*A)^{\frac{1}{2}}=\sum \limits _{i\in I} \sigma_i P_i,$$ 
	where $\{\sigma_i\}_{i\in I} $ are the (nonnegative) eigenvalues of $(A^*A)^{\frac{1}{2}}$ and $P_i$ is the corresponding orthogonal spectral projection. 
Now, choose $j\in I$ such that 
	\begin{align*}
		\sigma_j=\min_{i\in I} \{\sigma_i\mid \sigma_i\neq 0\}.
	\end{align*}
Note that if $\sigma_i$
is an eigenvalue of \(A^*A\), then
$\sqrt{\sigma_i}$ is an eigenvalue of \((A^*A)^{1/2}\), with the same multiplicity and the same eigenvector.
By the definition of $\ell(A)$ given in Definition \ref{def:LDM} which is defined in terms of eigenvalues of $A^*A$ (instead of $(A^*A)^{\frac{1}{2}}$), together with the expressions of Schatten-$p$ norms in terms of eigenvalues of the matrix in equations \eqref{eqn:p_norm_spectral} and \eqref{eqn:infty_norm_spectral}, we obtain
\[
\widetilde{h}^2(A)= \frac{\|A\|_{1}\|A\|_{\infty}}{\|A\|_2^2}
\leq
\frac{(\sum\nolimits_i \sigma_i)\,\ell(A)\sigma_j}
{\sum\nolimits_i \sigma_i^2}.
\]
It follows by the definition of $\sigma_j$ above and equation \eqref{eqn:eqn_T_l} that
	\begin{align}
	\label{eqn:h_T_n_bound}
	\widetilde{h}^2(A)\leq \ell(A)<36K^2.
\end{align}
	Furthermore, by equation \eqref{eqn:eqn_T_l},
	\begin{align*}
		\widetilde{h}^2(T_n^*(A)) =\frac{\|T_n^*(A)\|_{1}\|T_n^*(A)\|_{\infty}}{\|T_n^*(A)\|_2^2}&< \frac{\|T_n^*\|_{1\rightarrow 1}\|A\|_{1}\|T_n^*\|_{\infty \rightarrow\infty}\|A\|_{\infty}}{(\frac{1}{4}\|T_n^*\|_{2\rightarrow 2}\|A\|_2)^2} \\
		& = 16 h^2(T_n^*)\widetilde{h}^2(A).
	\end{align*}
	Applying equation \eqref{eqn:h_T_n_bound} yields that
	\begin{align*}
	\widetilde{h}(T_n^*(A))< 4 h(T_n^*)\widetilde{h}(A)< 24 K^2.
	\end{align*}
By Lemma \ref{lemma:X_P},  there exists an orthogonal  projection $P\in M(N_n)$ such that
	\begin{align*}
		|\langle T_n^*(A),P\rangle_{\HS}| > \frac{1}{2\sqrt{4\log(48K^2)+2}} \|T_n^*(A)\|_2\|P\|_2.
	\end{align*}
	Therefore, by  equation \eqref{eqn:eqn_T_l},
	\begin{align*}
		|\langle A,T_n(P)\rangle_{\HS}|=|\langle T_n^*(A),P\rangle_{\HS}|> \frac{1}{8\sqrt{4\log (48K^2)+2}}\|T_n^*\|_{2\rightarrow 2}\|A\|_2\|P\|_2,
	\end{align*}
which yields by equation \eqref{eqn:T_T_2} that 
	\begin{align*}
	|\langle A,T_n(P)\rangle_{\HS}|> \frac{1}{8\sqrt{4\log (48K^2)+2}}\|T_n\|_{2\rightarrow 2}\|A\|_2\|P\|_2.
\end{align*}
At last, by the Cauchy-Schwarz inequality, we obtain
	\begin{align*}
		\|T_nP\|_2 > & \frac{1}{8\sqrt{4\log (48K^2)+2}}\|T_n\|_{2\rightarrow 2}\|P\|_2.
	\end{align*}
The conclusion follows by substituting $K=h(T_n)$.

\end{proof}

\begin{thm}
		\label{thm:Rnorm}
There exist a universal constant $C>1$ and two non-zero orthogonal projections $P, Q \in M(N_n)$,  such that for any linear map $T_n:M(N_n) \rightarrow M(N_n)$, we have
	\begin{align*}
		\frac{\|T_n\|_{2\rightarrow 2}}{C\log(h(T_n )+1)} < \frac{\langle T_n P,\,Q \rangle_{\HS}}{\sqrt{\tr(P)\tr(Q)}} \leq \|T_n\|_{2\rightarrow 2}.
	\end{align*}
\end{thm}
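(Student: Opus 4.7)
The plan is to chain the two preparatory results already assembled: Theorem~\ref{thm:deltanorm} produces a projection $P$ on which $T_n$ acts with norm comparable to $\|T_n\|_{2\to 2}$ up to a $\sqrt{\log h(T_n)}$ factor, and Lemma~\ref{lemma:X_P} then converts the matrix $T_nP$ into a second projection $Q$ capturing most of its Hilbert--Schmidt mass, again up to a $\sqrt{\log\widetilde h(T_nP)}$ factor. The upper bound is immediate: since $P,Q$ are orthogonal projections we have $\|P\|_2=\sqrt{\tr(P)}$ and $\|Q\|_2=\sqrt{\tr(Q)}$, and the Cauchy--Schwarz inequality in the Hilbert--Schmidt inner product gives $|\langle T_nP,Q\rangle_{\HS}|\le \|T_nP\|_2\|Q\|_2\le \|T_n\|_{2\to 2}\sqrt{\tr(P)\tr(Q)}$.

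For the lower bound, set $K:=h(T_n)$ and first apply Theorem~\ref{thm:deltanorm} to extract a non-zero orthogonal projection $P$ satisfying $\|T_nP\|_2>\|T_n\|_{2\to 2}\|P\|_2/(8\sqrt{4\log(48K^2)+2})$. Then consider the matrix $X:=T_n(P)$. The crucial intermediate step is to bound its Schatten height $\widetilde h(X)$ polynomially in $K$. Since $P$ is a projection, $\|P\|_\infty=1$ and $\|P\|_1=\tr(P)=\|P\|_2^{2}$, so
\[
\|X\|_1\|X\|_\infty\;\le\;\|T_n\|_{1\to 1}\|T_n\|_{\infty\to\infty}\,\|P\|_1\|P\|_\infty
\;=\;\|T_n\|_{1\to 1}\|T_n\|_{\infty\to\infty}\,\|P\|_2^{2}.
\]
Combined with the lower bound on $\|X\|_2$ this yields $\widetilde h^{2}(X)\le 64(4\log(48K^2)+2)\,h^{2}(T_n)$, i.e.\ $\widetilde h(X)\le 8K\sqrt{4\log(48K^2)+2}$.

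Applying Lemma~\ref{lemma:X_P} to $X$ now produces a non-zero orthogonal projection $Q\in M(N_n)$ with $|\langle X,Q\rangle_{\HS}|\ge \|X\|_2\|Q\|_2/\bigl(2\sqrt{4\log(2\widetilde h(X))+2}\bigr)$. Chaining this with the earlier bound on $\|T_nP\|_2=\|X\|_2$ and dividing by $\sqrt{\tr(P)\tr(Q)}=\|P\|_2\|Q\|_2$ delivers
\[
\frac{|\langle T_nP,Q\rangle_{\HS}|}{\sqrt{\tr(P)\tr(Q)}}
\;>\;\frac{\|T_n\|_{2\to 2}}{16\sqrt{4\log(48K^2)+2}\,\sqrt{4\log(2\widetilde h(X))+2}}.
\]
Since $\widetilde h(X)$ is at most $8K\sqrt{4\log(48K^2)+2}$, we have $\log(2\widetilde h(X))=O(\log K+1)$ for $K\ge 1$, so the product of the two square roots in the denominator is at most $C\log(h(T_n)+1)$ for a universal constant $C>1$, matching the required form. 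Orienting $Q$ appropriately (or absorbing a sign into the statement via absolute value) removes the modulus.

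The main obstacle is the middle step, controlling the Schatten height of $X=T_n(P)$. Lemma~\ref{lemma:X_P} costs a factor $\sqrt{\log\widetilde h(X)}$, so any worse-than-polynomial dependence of $\widetilde h(X)$ on $h(T_n)$ would break the final $\log(h(T_n)+1)$ bound. What makes the argument work is that $P$ is a \emph{projection}, which is precisely the hypothesis needed to collapse $\|P\|_1\|P\|_\infty$ down to $\|P\|_2^{2}$ and thereby produce only a polynomial-in-$K$ growth for $\widetilde h(X)$, which logarithms then tame.
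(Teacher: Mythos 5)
Your proposal is correct and follows essentially the same route as the paper: extract $P$ from Theorem~\ref{thm:deltanorm}, use the projection identity $\|P\|_1\|P\|_\infty=\|P\|_2^2$ to bound $\widetilde h(T_nP)$ by $h(T_n)$ times a $\sqrt{\log}$ factor, and then apply Lemma~\ref{lemma:X_P} to obtain $Q$, with the upper bound via Cauchy--Schwarz. The only difference is cosmetic: where you argue informally that the product of the two square-root factors is $O(\log h(T_n)+1)$, the paper packages this final calibration into the auxiliary Lemma~\ref{lemma_f_K}.
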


\begin{proof}
	We begin by proving the second inequality, as it is simpler to establish. 
		For any linear map $T_n : M(N_n) \to M(N_n)$ and any matrices $X,Y \in M(N_n)$, we have by the Cauchy--Schwarz inequality for the Hilbert--Schmidt inner product that
	\begin{align*}
		|\langle T_n X,\, Y\rangle_{\HS}|
		\le \|T_n X\|_2\,\|Y\|_2
		\le \|X\|_2\,\|T_n\|_{2\rightarrow 2}\,\|Y\|_2.
	\end{align*}
	Take $X = P$ and $Y = Q$, where $P$ and $Q$ are orthogonal projections. 
	Since $P^2 = P$ and $Q^2 = Q$, we have
	\begin{align*}
		\|P\|_2 = \sqrt{\tr(P^*P)} = \sqrt{\tr(P)} \quad\text{and}\quad
		\|Q\|_2 = \sqrt{\tr(Q^*Q)} = \sqrt{\tr(Q)}.
	\end{align*}
	Therefore,
	\begin{align*}
		\frac{|\langle T_n P,\, Q\rangle_{\HS}|}{\sqrt{\tr(P)\tr(Q)}}
		\le \|T_n\|_{2\rightarrow 2},
	\end{align*}
	which is the desired inequality.
	
	Now, we prove the first inequality. Suppose $h(T_n)= K$ for $K\geq 1$ a real number.  For 
		$f(K)= {8\sqrt{4\log (48K^2)+2}}$,  
Theorem \ref{thm:deltanorm} states that there exists an orthogonal projection $P$ such that
	\begin{align}
		\label{eqn:T_f_K}
		\|T_n(P)\|_2>\|T_n \|_{2\rightarrow 2}\|P\|_2/f(K),
	\end{align}
which yields that
	\begin{align}
		\label{eqn:h(T_n(P))}
		\widetilde{h}(T_n(P))=\frac{\sqrt{\|T_n(P)\|_{1}\|T_n(P)\|_{\infty}}}{\|T_n(P)\|_2}
		& < \frac{\sqrt{\|T_n\|_{1\rightarrow 1}{\|P\|_{1}{\|T_n\|}_{\infty\rightarrow \infty}{\|P\|}_{\infty}}}}{\|T_n\|_{2\rightarrow 2}\|P\|_2/f(K)} \nonumber\\
		&= h(T_n)\frac{\sqrt{{\|P\|_{1}{\|P\|}_{\infty}}}}{\|P\|_2} f(K).
	\end{align}
Note that, for an orthogonal projection $P$ of rank $r$, the singular values of \(P\) are \(1\) (with multiplicity \(r\)) and \(0\) (with multiplicity \(N_n-r\)). Hence,
	\begin{align*}
\|P\|_1=\sum_{i}\sigma_i(P)=r,\qquad
\|P\|_2=\Big(\sum_i\sigma_i(P)^2\Big)^{1/2}=\sqrt{r},\\
\text{and}\qquad\|P\|_\infty=\max_i\sigma_i(P)=1,
\end{align*}
which yields that
$$\|P\|_2^2=\|P\|_{1}\|P\|_{\infty}.$$
Plugging the above equation into \eqref{eqn:h(T_n(P))} gives
	\begin{align*}
	\widetilde{h}(T_n(P)) < Kf(K).
\end{align*}
By Lemma \ref{lemma:X_P}, there exists an orthogonal projection $Q\in M(N_n)$ such that
	\begin{align*}
		|\langle T_n(P),Q\rangle_{\HS}|&\geq  \frac{1}{2\sqrt{4\log(2Kf(K))+2}}\|T_n(P)\|_2\|Q\|_2\\
		& > \frac{1}{2f(K)\sqrt{4\log(2Kf(K))+2}}\|T_n\|_{2\rightarrow 2}\|P\|_2\|Q\|_2,
	\end{align*}
where we used \eqref{eqn:T_f_K} in the last inequality. Applying Lemma \ref{lemma_f_K}, there exists a universal constant $C>1$ such that for every $K\ge 1$,
		\begin{align*}
	\frac{\|T_n\|_{2\rightarrow 2}}{C(\log K+1)} < \frac{\langle T_n P,\,Q \rangle_{\HS}}{\sqrt{\tr(P)\tr(Q)}}.
\end{align*}
We complete the proof by substituting $K$ with $h(T_n)$, as assumed at the beginning of this proof.

\end{proof}

\subsection{Proof of Theorem \ref{thm:Quantum_expander_nixing_lemma_c}}
\label{sec:Quantum_expander_nixing_lemma_c}
With Theorem \ref{thm:Rnorm}, we are finally ready to establish the converse of quantum EML.
\begin{proof}[Proof of Theorem \ref{thm:Quantum_expander_nixing_lemma_c}]
As in the proof of Theorem \ref{thm:Quantum_expander_nixing_lemma}, we let $E: M(N_n) \rightarrow M(N_n)$ be the orthogonal projection onto the space $\langle I_{N_n} \rangle=\ker(1-T_n)$. Then $T_n|_{I_{N_n}^{\perp}}$ is the restriction of $T_n$ to  $I_{N_n}^{\perp}$. 
By Definition \ref{def:reducedspectralradius},
\begin{align*}
	\big\|T_n|_{I_{N_n}^{\perp}}\big\|_{2\rightarrow 2}=\rho(n).
\end{align*}  
Recall that in equation \eqref{eqn:Tn_expression}, we have represented $T_n$ as a linear operator acting on $M(N_n)$: 
\[
T_n(\eta)=\frac{1}{d}\sum_{j=1}^d u_j^{(n)}\eta\,u_j^{(n)*},\qquad \eta\in M(N_n),
\]
where each \(u_j^{(n)}\) is unitary. Then \(T_n\) is the average of unitary conjugations, hence it is completely positive, trace-preserving and unital. Lemma \ref{lemma:CPTP} shows that
	\[
\|T_n\|_{1 \to 1} = 1
\qquad \text{and} \qquad
\|T_n\|_{\infty \to \infty} = 1 .
\]
Restricting to the orthogonal complement \(\langle I_{N_n}\rangle^\perp\) cannot increase these operator norms, so
\[
\big\|T_n\big|_{\langle I_{N_n}\rangle^\perp}\big\|_{1\to1}\le 1\quad\text{and}\quad
\big\|T_n\big|_{\langle I_{N_n}\rangle^\perp}\big\|_{\infty\to\infty}\le 1.
\]
Therefore,
 $$h\big(T_n|_{I_{N_n}^{\perp}}\big)= \frac{\sqrt{\big\|T_n|_{I_{N_n}^{\perp}}\big\|_{1\rightarrow 1}\big\|T_n|_{I_{N_n}^{\perp}}\big\|_{\infty\rightarrow \infty}}}{\rho(n)}\leq \frac{1}{\rho(n)}.$$
Theorem \ref{thm:Rnorm} and equations \eqref{eqn:reason1} and \eqref{eqn:reason2} imply there exist a universal constant $C > 0$ and nonzero orthogonal projections $P_1, P_2 \in M(N_n)$ such that
\begin{align*}
	{\langle P_1,\,(T_n-E)P_2\rangle_{\HS}}> \frac{\rho(n)}{C(\log(\frac{1}{\rho(n)})+1)}
\sqrt{\tr(P_1)\tr(P_2)}.
\end{align*}
From equation \eqref{eqn:P_1_2_HS}, where
	$\langle P_1,\,EP_2\rangle_{\HS}=\frac{1}{N_n}\tr(P_1)\tr(P_2)$,
we have that $P_1$ and $P_2$ satisfy
\begin{align*}
	\left|{\langle P_1,\,T_n(P_2) \rangle_{\HS}}-\frac{1}{N_n}\tr(P_1)\tr(P_2) \right| >\frac{\rho(n)}{C(-\log(\rho(n))+1)}
\sqrt{\tr(P_1)\tr(P_2)},
\end{align*}
as desired.
\end{proof}

\appendix

\section{Auxiliary results}
\label{appendix}
\begin{lemma}
	\label{lem:adjoint_l2_norm}
	Let $S:\mathbb{C}^{N_n}\to\mathbb{C}^{N_n}$ be a linear map, and let $S^*$ denote its adjoint
	with respect to the standard Hermitian inner product. Then
	\[
	\|S\|_{\ell_2\to \ell_2}
	=
	\|S^*\|_{\ell_2\to \ell_2}.
	\]
\end{lemma}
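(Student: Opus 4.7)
The plan is to prove this standard identity using the duality characterization of the $\ell_2$ norm, namely that for every $v\in\mathbb{C}^{N_n}$ one has $\|v\|_{\ell_2}=\sup_{\|w\|_{\ell_2}=1}|\langle w,v\rangle|$, which follows immediately from the Cauchy–Schwarz inequality together with the choice $w=v/\|v\|_{\ell_2}$.

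Using this, I would rewrite the operator norm as a double supremum:
\[
\|S\|_{\ell_2\to\ell_2}
=\sup_{\|x\|_{\ell_2}=1}\|Sx\|_{\ell_2}
=\sup_{\|x\|_{\ell_2}=1}\sup_{\|y\|_{\ell_2}=1}|\langle y,Sx\rangle|.
\]
Next, I would invoke the defining property of the Hermitian adjoint, $\langle y,Sx\rangle=\langle S^{*}y,x\rangle$, and interchange the two suprema (permissible since both range over compact sets and the quantity inside depends jointly continuously on $x$ and $y$) to obtain
\[
\|S\|_{\ell_2\to\ell_2}
=\sup_{\|y\|_{\ell_2}=1}\sup_{\|x\|_{\ell_2}=1}|\langle S^{*}y,x\rangle|
=\sup_{\|y\|_{\ell_2}=1}\|S^{*}y\|_{\ell_2}
=\|S^{*}\|_{\ell_2\to\ell_2},
\]
where the second equality again uses the duality characterization of $\|\cdot\|_{\ell_2}$ applied to $v=S^{*}y$.

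There is essentially no obstacle here; the result is completely standard. An alternative route, if one prefers to avoid duality, is to use the identity $\|S\|_{\ell_2\to\ell_2}^2=\|S^{*}S\|_{\ell_2\to\ell_2}$ (the so-called $C^{*}$-identity, which is immediate in finite dimensions because $\|S^{*}S\|_{\ell_2\to\ell_2}$ equals the largest eigenvalue of $S^{*}S$, which in turn equals the square of the largest singular value of $S$), and then observe that $S$ and $S^{*}$ share the same singular values. Either approach is short; I would present the duality version because it is self-contained and does not require invoking the singular value decomposition.
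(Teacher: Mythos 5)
Your argument is correct, but it takes a genuinely different route from the paper. The paper squares the norm and passes to the self-adjoint operators $S^*S$ and $SS^*$: it shows $\|S\|_{\ell_2\to\ell_2}^2=\|S^*S\|_{\ell_2\to\ell_2}$ and $\|S^*\|_{\ell_2\to\ell_2}^2=\|SS^*\|_{\ell_2\to\ell_2}$ (using that for a self-adjoint positive semidefinite operator the norm is the largest eigenvalue) and then invokes the standard fact that $S^*S$ and $SS^*$ have the same nonzero eigenvalues --- essentially the second, ``alternative'' route you sketch at the end. Your main proof instead uses the duality characterization $\|v\|_{\ell_2}=\sup_{\|w\|_{\ell_2}=1}|\langle w,v\rangle|$ to write $\|S\|_{\ell_2\to\ell_2}$ as a double supremum of $|\langle y,Sx\rangle|=|\langle S^*y,x\rangle|$ and then swaps the suprema. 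This is perfectly valid; the only cosmetic remark is that your appeal to compactness and joint continuity is unnecessary, since $\sup_x\sup_y=\sup_y\sup_x$ holds for any function of two variables. What your route buys is that it avoids spectral theory entirely and carries over verbatim to bounded operators on infinite-dimensional Hilbert spaces; what the paper's route buys is that it sits naturally alongside the eigenvalue computations used elsewhere in the appendix and makes the $C^*$-identity $\|S\|^2=\|S^*S\|$ explicit.
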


\begin{proof}
	Recall that the $\ell_2\to\ell_2$ operator norm is given by
	\[
	\|S\|_{\ell_2\to\ell_2}
	=
	\sup_{\|x\|_{\ell_2}=1}\|Sx\|_{\ell_2}.
	\]
	For any $x\in\mathbb{C}^{N_n}$ with $\|x\|_{\ell_2}=1$, we have
	\[
	\|Sx\|_{\ell_2}^2
	=
	\langle Sx,Sx\rangle
	=
	\langle S^*Sx,x\rangle.
	\]
	Taking the supremum over all unit vectors $x$ yields
	\[
	\|S\|_{\ell_2\to\ell_2}^2
	=
	\sup_{\|x\|_{\ell_2}=1}\langle S^*Sx,x\rangle.
	\]
Since $S^* S$ is self-adjoint (because $(S^* S)^* = S^* (S^*)^* = S^* S$) and positive semidefinite (because $\langle S^* S x, x \rangle = \|S x\|_{\ell_2}^2 \geq 0$ for all $x$), its eigenvalues are real and nonnegative. For any self-adjoint operator, the operator norm equals the spectral radius (the supremum of the absolute values of its eigenvalues). Thus,
		\[
	\sup_{\|x\|_{\ell_2}=1}\langle S^*Sx,x\rangle
	=
	\|S^*S\|_{\ell_2\to\ell_2}.
	\]
	Similarly,
\[
\|S^*\|_{\ell_2\to\ell_2}^2
=
\|SS^*\|_{\ell_2\to\ell_2}.
\]	
	It is a standard fact that the positive semidefinite matrices $S^*S$ and $SS^*$ have the
	same nonzero eigenvalues (counted with multiplicity). In particular,
	\[
	\|S^*S\|_{\ell_2\to\ell_2}
	=
	\|SS^*\|_{\ell_2\to\ell_2}.
	\]
	Taking square roots completes the proof:
	\[
	\|S\|_{\ell_2\to\ell_2}
	=
	\|S^*\|_{\ell_2\to\ell_2}.
	\]
\end{proof}

\begin{lemma}
	\label{lemma_f_K}
	For $K\ge 1$, denote
	\[
	f(K):=8\sqrt{4\log(48K^2)+2}.
	\]
	Then there exist a universal constant $C>0$ such that for every $K\ge 1$,
	\[
	2f(K)\sqrt{4\log(2Kf(K))+2}\le C(\log K+1).
	\]
\end{lemma}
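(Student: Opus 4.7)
The plan is to show that both factors appearing on the left-hand side grow no faster than $\sqrt{\log K + 1}$, so that their product is controlled by $\log K + 1$ up to an absolute constant.

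First, I would bound $f(K)$ itself. Since $\log(48K^2) = 2\log K + \log 48$, we have
\[
4\log(48K^2) + 2 = 8\log K + (4\log 48 + 2),
\]
so for $K \geq 1$ this quantity is bounded above by $C_0(\log K + 1)$ for an absolute constant $C_0$. Taking square roots and multiplying by $8$ gives
\[
f(K) \;\leq\; C_1 \sqrt{\log K + 1}
\]
for some absolute constant $C_1$.

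Next, I would estimate $\log(2Kf(K)) = \log 2 + \log K + \log f(K)$. The first two terms are clearly dominated by a constant multiple of $\log K + 1$ on $K \geq 1$. For the remaining term, the bound on $f(K)$ gives
\[
\log f(K) \;\leq\; \log C_1 + \tfrac{1}{2}\log(\log K + 1),
\]
and since $\log(\log K + 1) \leq \log K + 1$ for all $K \geq 1$, this is again absorbed into $C_2(\log K + 1)$. Combining, $4\log(2Kf(K)) + 2 \leq C_3(\log K + 1)$, so that
\[
\sqrt{4\log(2Kf(K))+2} \;\leq\; \sqrt{C_3}\,\sqrt{\log K + 1}.
\]

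Finally, multiplying the two estimates yields
\[
2f(K)\sqrt{4\log(2Kf(K))+2}
\;\leq\; 2 C_1 \sqrt{C_3}\,(\log K + 1),
\]
which is the desired bound with $C := 2 C_1 \sqrt{C_3}$. The only real subtlety, rather than an obstacle, is to carry the additive ``$+1$'' throughout so that every estimate holds uniformly on $K \geq 1$ (in particular at $K = 1$, where $\log K = 0$); the key analytic input is simply the observation that $\log f(K)$ grows only like $\log\log K$, and is therefore dominated by $\log K + 1$.
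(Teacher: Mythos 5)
Your proof is correct. Each step checks out: $4\log(48K^2)+2 = 8\log K + (4\log 48+2)$ is indeed bounded by a constant multiple of $\log K+1$ on $K\ge 1$, the inequality $\log(\log K+1)\le \log K+1$ holds since $\log t\le t$ for $t\ge 1$, and the constant $\log C_1$ is harmless, so the product of your two $\sqrt{\log K+1}$-type bounds gives the claim with an explicit $C$. Your route differs from the paper's: the paper forms the quotient $g(K)=\frac{2f(K)\sqrt{4\log(2Kf(K))+2}}{\log K+1}$, computes its asymptotics ($f(K)\sim 16\sqrt{2}\,\sqrt{\log K}$ and the second factor $\sim 2\sqrt{\log K}$, so $\limsup_{K\to\infty} g(K)=64\sqrt{2}$), and then invokes continuity of $g$ on $[1,\infty)$ to conclude that the supremum is finite; the constant there is defined non-constructively as $C:=\sup_{K\ge 1}g(K)$. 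Your argument is more elementary and fully quantitative — it avoids the soft limsup-plus-continuity step and produces an explicit (if not optimized) constant — whereas the paper's asymptotic computation pins down the sharp growth constant $64\sqrt{2}$ at infinity, at the price of leaving the universal constant implicit. Either argument suffices for the way the lemma is used in Theorem~\ref{thm:Rnorm}, where only the existence of some universal $C$ matters.
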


\begin{proof}
	Define the quotient, for $K\ge 1$,
	\[
	g(K):=\frac{2f(K)\sqrt{4\log(2Kf(K))+2}}{\log K+1},
	\]
	where $\log K+1$ in the denominator avoids division by zero at $K=1$.
	Since
	\[
	f(K)=8\sqrt{4\log(48K^2)+2}
	=8\sqrt{8\log K + 4\log 48+2},
	\]
	for large $K$,
	\[
	f(K)\sim 8\sqrt{8\log K}=16\sqrt{2}\,\sqrt{\log K}.
	\]
	Then
	\[
	\sqrt{4\log(2Kf(K))+2}
	=\sqrt{4\log K + 4\log(2f(K)) + 2}
	=2\sqrt{\log K}\,(1+o(1)),
	\]
	since $\log(2f(K))=O(\log\log K)$.
	Therefore,
	\[
	2f(K)\sqrt{4\log(2Kf(K))+2}
	\sim 2\cdot(16\sqrt{2}\sqrt{\log K})\cdot(2\sqrt{\log K})
	=64\sqrt{2}\,\log K.
	\]
	Hence,
	\[
	\limsup_{K\to\infty} g(K)=64\sqrt{2}<\infty.
	\]
	
	Next, note that $g$ is continuous on $[1,\infty)$, as every expression involved is continuous for $K\ge1$. A continuous function on $[1,\infty)$ that has a finite limit superior at infinity attains a finite global supremum on $[1,\infty)$. Thus,
	\[
	C:=\sup_{K\ge1} g(K)<\infty.
	\]
	Then for every $K\ge1$,
	\[
	2f(K)\sqrt{4\log(2Kf(K))+2}=g(K)(\log K+1)\le C(\log K+1),
	\]
	which proves the claim.
\end{proof}

\begin{lemma}
	\label{lemma:CPTP}
	Let $T_n : M(N_n) \to M(N_n)$ be a completely positive, trace-preserving, and unital linear map. Then
	\[
	\|T_n\|_{1 \to 1} = 1
	\qquad \text{and} \qquad
	\|T_n\|_{\infty \to \infty} = 1 .
	\]
\end{lemma}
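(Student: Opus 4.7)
The plan is to prove each identity by establishing a matching upper and lower bound, treating the two norms in parallel and exploiting the duality between them.

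For $\|T_n\|_{\infty\to\infty} = 1$, the lower bound is immediate: since $T_n$ is unital, $T_n(I_{N_n}) = I_{N_n}$, so the ratio $\|T_n(I_{N_n})\|_\infty/\|I_{N_n}\|_\infty$ equals $1$. For the upper bound, I would invoke Stinespring's dilation theorem: any completely positive map $T_n$ admits a representation $T_n(X) = V^*\pi(X)V$, where $\pi$ is a $*$-representation and $V$ is a bounded operator. The unital assumption $T_n(I_{N_n}) = V^*V = I_{N_n}$ forces $V$ to be an isometry, so
\[
\|T_n(X)\|_\infty = \|V^*\pi(X)V\|_\infty \le \|V\|^2\|\pi(X)\|_\infty = \|X\|_\infty,
\]
which yields $\|T_n\|_{\infty\to\infty} \le 1$. (Alternatively, one could cite the Russo--Dye theorem, which directly asserts that the operator norm of a positive unital map equals the norm of its value on the identity.)

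For $\|T_n\|_{1\to1} = 1$, the lower bound is delivered by any density matrix $\rho$ (a positive matrix of unit trace): positivity of $T_n$ makes $T_n(\rho)$ positive, trace-preservation gives $\tr(T_n(\rho)) = 1$, and so $\|T_n(\rho)\|_1 = 1 = \|\rho\|_1$. For the upper bound I would pass to the Hilbert--Schmidt adjoint via the trace duality $\|X\|_1 = \sup_{\|Y\|_\infty \le 1}|\tr(Y^*X)|$ (the same duality already exploited in the proof of Proposition~\ref{prop:height_equal}), which yields
\[
\|T_n\|_{1\to1} = \|T_n^*\|_{\infty\to\infty}.
\]
I then need to verify that $T_n^*$ is itself CP and unital. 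Starting from a Kraus representation $T_n(\eta)=\sum_j A_j\eta A_j^*$, the adjoint admits the Kraus form $T_n^*(\zeta)=\sum_j A_j^*\zeta A_j$, so $T_n^*$ is CP. Trace preservation of $T_n$ is equivalent to $\sum_j A_j^*A_j = I_{N_n}$, which is exactly $T_n^*(I_{N_n}) = I_{N_n}$, giving unitality of $T_n^*$. Applying the previous step to $T_n^*$ then yields $\|T_n^*\|_{\infty\to\infty}\le 1$, and hence $\|T_n\|_{1\to1}\le 1$.

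The only substantive ingredient is the operator-norm contractivity of CP unital maps; once Stinespring (or Russo--Dye) is available, every remaining step is a short verification. The main conceptual point to handle carefully is the passage to the adjoint: I need to confirm that the CP, trace-preserving, unital properties interchange correctly under Hilbert--Schmidt duality, so that the $1\to1$ bound genuinely reduces to the $\infty\to\infty$ bound and no factor of $2$ leaks in from a naive self-adjoint/anti-self-adjoint decomposition.
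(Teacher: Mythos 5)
Your proof is correct, but it takes a genuinely different route from the paper's, at least for the trace-norm half. The paper argues both bounds directly from positivity: for $\|T_n\|_{1\to 1}\le 1$ it uses trace preservation on positive matrices together with the Jordan decomposition, and for $\|T_n\|_{\infty\to\infty}\le 1$ it invokes the standard fact (Russo--Dye) that positive unital maps are contractive in operator norm; the lower bounds come from evaluating on positive matrices and on the identity, exactly as you do. You instead prove only the $\infty\to\infty$ bound directly---via Stinespring with an isometry $V$ (which uses the full strength of complete positivity, available here) or Russo--Dye---and then deduce $\|T_n\|_{1\to 1}\le 1$ from trace duality, $\|T_n\|_{1\to 1}=\|T_n^*\|_{\infty\to\infty}$, after checking through the Kraus form that the Hilbert--Schmidt adjoint $T_n^*$ is again completely positive and unital, trace preservation of $T_n$ being equivalent to $T_n^*(I_{N_n})=I_{N_n}$. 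Your route has the advantage of handling non-Hermitian inputs cleanly: the paper's Jordan-decomposition step gives $\|T_n(X)\|_1\le\|X\|_1$ immediately only for self-adjoint $X$, and extending to arbitrary $X$ without leaking a factor of $2$ really does require either a duality/Russo--Dye argument like yours or $2$-positivity, a point you rightly flag. The paper's argument is shorter and needs only positivity rather than complete positivity, with no Stinespring or Kraus machinery, but it is terser precisely at that extension step. Note also that in this paper $T_n$ is given explicitly in Kraus form \eqref{eqn:Tn_expression}, so your Kraus-based verification of the adjoint's properties is immediately available; for the lemma as abstractly stated it is supplied by Choi's theorem.
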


\begin{proof}
	Since $T_n$ is positive and trace-preserving, for any $X \ge 0$ we have
	\[
	\|T_n(X)\|_1 = \operatorname{tr}(T_n(X)) = \operatorname{tr}(X) = \|X\|_1.
	\]
	Using the Jordan decomposition and the positivity of $T_n$, it follows that
	$\|T_n(X)\|_1 \le \|X\|_1$ for all $X \in M(N_n)$, and hence
	$$\|T_n\|_{1 \to 1} \le 1.$$
	The above equality for positive $X$ shows that the bound is attained, and therefore
	\[
	\|T_n\|_{1 \to 1} = 1.
	\]
	
	Next, since $T_n$ is positive and unital, we have $T_n(I) = I$. For any $X$,
	positivity and unitality imply $\|T_n(X)\|_\infty \le \|X\|_\infty$, and hence
	\[
	\|T_n\|_{\infty \to \infty} \le 1.
	\]
	On the other hand,
	\[
	\|T_n\|_{\infty \to \infty} \ge \frac{\|T_n(I)\|_\infty}{\|I\|_\infty} = 1,
	\]
	which yields
	\[
	\|T_n\|_{\infty \to \infty} = 1.
	\]
	This completes the proof.
\end{proof}

\section*{Acknowledgments}
This research project was partially supported by the Individual Research Grant at Texas A\&M University. The author would like to thank the anonymous reviewers and the Editors for their careful reading and constructive comments, which greatly improved the quality of this paper. The author also thanks Ryo Toyota for helpful discussions on this subject.

\bibliography{bib-ms}

@article{bannink2020quasirandom,
	title={Quasirandom quantum channels},
	author={Bannink, Tom and Bri{\"e}t, Jop and Labib, Farrokh and Maassen, Hans},
	journal={Quantum},
	volume={4},
	pages={298},
	year={2020},
	publisher={Verein zur F{\"o}rderung des Open Access Publizierens in den Quantenwissenschaften}
}

@article{ConlonZhao2017QuasirandomCayley,
	author  = {Conlon, David and Zhao, Yufei},
	title   = {Quasirandom {C}ayley graphs},
	journal = {Discrete Analysis},
	year    = {2017},
	issn    = {2397-3129}
}

@article{haagerup1985grothendieck,
	author  = {Haagerup, Uffe},
	title   = {The {Grothendieck} inequality for bilinear forms on {$C^*$}-algebras},
	journal = {Advances in Mathematics},
	volume  = {56},
	number  = {2},
	pages   = {93--116},
	year    = {1985},
	publisher = {Elsevier}
}

@article{chung1989quasi,
	title={Quasi-random graphs},
	author={Chung, Fan R. K. and Graham, Ronald L. and Wilson, Richard M.},
	journal={Combinatorica},
	volume={9},
	number={4},
	pages={345--362},
	year={1989},
	publisher={Springer}
}

@article{lancien2025note,
	title={A note on quantum expanders},
	author={Lancien, C{\'e}cilia and Youssef, Pierre},
	journal={Letters in Mathematical Physics},
	volume={115},
	number={6},
	pages={140},
	year={2025},
	publisher={Springer}
}

@article{li2023connections,
	title={Connections between graphs and matrix spaces},
	author={Li, Yinan and Qiao, Youming and Wigderson, Avi and Wigderson, Yuval and Zhang, Chuanqi},
	journal={Israel Journal of Mathematics},
	volume={256},
	number={2},
	pages={513--580},
	year={2023},
	publisher={Springer}
}

@article{lancien2020characterizing,
	title={Characterizing expansion, classically and quantumly},
	author={Lancien, C{\'e}cilia},
	journal={Quantum Views},
	volume={4},
	pages={44},
	year={2020},
	publisher={Verein zur F{\"o}rderung des Open Access Publizierens in den Quantenwissenschaften}
}

@book{nielsen2001quantum,
	title={Quantum computation and quantum information},
	author={Nielsen, Michael A and Chuang, Isaac L},
	volume={2},
	year={2001},
	publisher={Cambridge university press Cambridge}
}

@inproceedings{jeronimo2022almost,
	title={Almost {R}amanujan expanders from arbitrary expanders via operator amplification},
	author={Jeronimo, Fernando Granha and Mittal, Tushant and Roy, Sourya and Wigderson, Avi},
	booktitle={2022 IEEE 63rd Annual Symposium on Foundations of Computer Science (FOCS)},
	pages={378--388},
	year={2022},
	organization={IEEE}
}

@inproceedings{chen2013small,
	title={Small-bias sets for nonabelian groups: derandomizations of the {A}lon-{R}oichman theorem},
	author={Chen, Sixia and Moore, Cristopher and Russell, Alexander},
	booktitle={International Workshop on Approximation Algorithms for Combinatorial Optimization},
	pages={436--451},
	year={2013},
	organization={Springer}
}

@book{kitaev2002classical,
	title={Classical and quantum computation},
	author={Kitaev, Alexei Yu and Shen, Alexander and Vyalyi, Mikhail N},
	number={47},
	year={2002},
	publisher={American Mathematical Soc.}
}

@book{alon2000probabilistic,
	title={The {P}robabilistic {M}ethod, second edition},
	author={Alon, Noga and Spencer, Joel H},
	year={2000},
	publisher={John Wiley \& Sons}
}

@book{horn2012matrix,
	title={Matrix analysis},
	author={Horn, Roger A and Johnson, Charles R},
	year={2012},
	publisher={Cambridge university press}
}

@incollection{gillespie1991noncommutative,
	title={Noncommutative variations on theorems of {M}arcel {R}iesz and others},
	author={Gillespie, TA},
	booktitle={PAUL HALMOS Celebrating 50 Years of Mathematics},
	pages={221--236},
	year={1991},
	publisher={Springer}
}

@article{lev2015discrete,
	title={Discrete norms of a matrix and the converse to the expander mixing lemma},
	author={Lev, Vsevolod F},
	journal={Linear Algebra and its Applications},
	volume={483},
	pages={158--181},
	year={2015},
	publisher={Elsevier}
}

@article{hastings2007random,
	title={Random unitaries give quantum expanders},
	author={Hastings, Matthew B},
	journal={Physical Review A},
	volume={76},
	number={3},
	pages={032315},
	year={2007},
	publisher={APS}
}

@book{tao2015expansion,
	title={Expansion in finite simple groups of {L}ie type},
	author={Tao, Terence},
	volume={164},
	year={2015},
	publisher={American Mathematical Soc.}
}

@article{pisier2014quantum,
	title={Quantum expanders and geometry of operator spaces},
	author={Pisier, Gilles},
	journal={Journal of the European Mathematical Society},
	volume={16},
	number={6},
	pages={1183--1219},
	year={2014}
}

@article{hoory2006expander,
	title={Expander graphs and their applications},
	author={Hoory, Shlomo and Linial, Nathan and Wigderson, Avi},
	journal={Bulletin of the American Mathematical Society},
	volume={43},
	number={4},
	pages={439--561},
	year={2006}
}

@inproceedings{ambainis2004small,
	title={Small pseudo-random families of matrices: {D}erandomizing approximate quantum encryption},
	author={Ambainis, Andris and Smith, Adam},
	booktitle={Approximation, Randomization, and Combinatorial Optimization. Algorithms and Techniques: 7th International Workshop on Approximation Algorithms for Combinatorial Optimization Problems, APPROX 2004, and 8th International Workshop on Randomization and Computation, RANDOM 2004, Cambridge, MA, USA, August 22-24, 2004. Proceedings},
	pages={249--260},
	year={2004},
	organization={Springer}
}

@inproceedings{ben2008quantum,
	title={Quantum expanders: {M}otivation and constructions},
	author={Ben-Aroya, Avraham and Schwartz, Oded and Ta-Shma, Amnon},
	booktitle={2008 23rd Annual IEEE Conference on Computational Complexity},
	pages={292--303},
	year={2008},
	organization={IEEE}
}

@article{hastings2007entropy,
	title={Entropy and entanglement in quantum ground states},
	author={Hastings, Matthew B},
	journal={Physical Review B},
	volume={76},
	number={3},
	pages={035--114},
	year={2007},
	publisher={APS}
}

@article{hastings2008classical,
	title={Classical and quantum tensor product expanders},
	author={Hastings, Matthew B and Harrow, Aram W},
	journal={arXiv preprint arXiv:0804.0011},
	volume={9},
	number={3},
	pages={336--360},
	year={2009}
}

@inproceedings{aharonov2014local,
	title={Local tests of global entanglement and a counterexample to the generalized area law},
	author={Aharonov, Dorit and Harrow, Aram W and Landau, Zeph and Nagaj, Daniel and Szegedy, Mario and Vazirani, Umesh},
	booktitle={2014 IEEE 55th Annual Symposium on Foundations of Computer Science},
	pages={246--255},
	year={2014},
	organization={IEEE}
}

@article{lubotzky2012expander,
	title={Expander graphs in pure and applied mathematics},
	author={Lubotzky, Alexander},
	journal={Bulletin of the American Mathematical Society},
	volume={49},
	number={1},
	pages={113--162},
	year={2012}
}

\end{document}